\tikzset{
	vert/.style={circle,inner sep=1.5,fill=white,draw=black,minimum size=.3cm},
    dummy/.style={circle,fill=black,draw=black,inner sep=2.5},
	edge/.style={color=black, thick},
	diredge/.style={->,>={Stealth[width=8pt,length=8pt]},color=black, thick},
	timelabel/.style={fill=white,font=\footnotesize, text centered},
	wave/.style={decorate,decoration={coil,aspect=0}},
	dirwave/.style={->, >={Stealth[width=8pt,length=8pt]},decorate,decoration={coil,aspect=0}},
	diredge2/.style={->,>={Stealth[width=8pt,length=8pt]}}
}
\tikzstyle{process} = [rectangle, 
\tikzstyle{decision} = [diamond, 
\tikzstyle{arrow} = [thick,->,>=stealth]
\newtheorem{theorem}{Theorem}
\newtheorem{lemma}[theorem]{Lemma}
\newtheorem{observation}[theorem]{Observation}
\newtheorem{corollary}[theorem]{Corollary}
\newtheorem{definition}{Definition}
\theoremstyle{definition}
\crefname{figure}{Figure}{Figures}
\newcommand{\commentout}[1]{}
\newcommand{\problemdefopt}[3]{
	\begin{center}\fbox{
	\begin{minipage}{0.95\textwidth}
		\noindent
		#1
		\vspace{5pt}\\
		\setlength{\tabcolsep}{3pt}
		\begin{tabularx}{\textwidth}{@{}lX@{}}
			\textrm{Input:}     & #2 \\
			\textrm{Task:}  & #3
		\end{tabularx}
	\end{minipage}}
	\end{center}
}
\newcommand{\tlabel}[1]{#1\mid f(#1)\mid e(#1)}
\newcommand{\tttlabel}[3]{#1\mid #2\mid #3}
\begin{document}

\title{Tournament Robustness via Redundancy\footnote{This work was supported by the Israel Science Foundation, grant nr.~1470/24, by the European Union's Horizon Europe research and innovation programme under grant agreement~949707, and by the European Research Council, grant nr.~101039913 (PARAPATH).}}

\author{Klim~Efremenko}
\author{Hendrik~Molter}
\author{Meirav~Zehavi}

\affil{\small Department of Computer Science, Ben-Gurion~University~of~the~Negev, 
Beer-Sheva, 
Israel\\ \texttt{klim@bgu.ac.il, molterh@post.bgu.ac.il, meiravze@bgu.ac.il}}

\date{}

\maketitle
\begin{abstract}
A \emph{knockout tournament} is one of the most simple and popular forms of competition. Here, we are a given binary tournament tree where all leaves are labeled with seed position names. The players participating in the tournament are assigned to the seed positions. In each round, the two players assigned to leaves of the tournament tree with a common parent compete, and the winner is promoted to the parent. The last remaining player is the winner of the tournament.

In this work, we study the problem of making knock-out tournaments \emph{robust against manipulation}, where the form of manipulation we consider is changing the outcome of a game. We assume that our input is only the number of players that compete in the tournament, and the number of manipulations against which the tournament should be robust. Furthermore, we assume that there is a strongest player, that is, a player that beats any of the other players. However, the identity of this player is not part of the problem input.

To ensure robustness against manipulation, we uncover an unexpected connection between the problem at hand and communication protocols that utilize a feedback channel, offering resilience against adversarial noise. We explore the trade-off between the size of the robust tournament tree and the degree of protection against manipulation. Specifically, we demonstrate that it is possible to tolerate up to a $1/3$ fraction of manipulations along each leaf-to-root path, at the cost of only a polynomial blow-up in the tournament size.
\end{abstract}

\section{Introduction}

A \emph{knockout tournament}, also referred to as a single-elimination or cup tournament, is a very popular competition format in which participants are paired-up into matches in successive rounds, with the losers being eliminated after each round. This format is commonly used in sports~\cite{chaudhary2024make,CR11,GMSS12,suksompong2021tournaments,williams_moulin_2016} and also applies to other areas like elections and decision-making processes~\cite{vu2009complexity,laslier1997tournament,brandt2007pagerank,Tullock80,rosen1985prizes}. Formally, a knockout tournament involves~$n$ players, where $n$, w.l.o.g., is a power of $2$. The players are assigned to the leaves of a complete binary tree through a bijective mapping known as a {\em seeding}. As long as at least two players remain, those paired in the tree (i.e., those who share a common parent) compete in a match, with the winner, now mapped to the common parent, advancing to the next round. The losers are eliminated, and their corresponding leaves are removed from the tree. This process continues until only one player remains, who is declared the {\em winner}.

Knockout tournaments, due to their inherently competitive nature, are particularly susceptible to various forms of manipulation. These manipulations are frequently highlighted by the media~(e.g., \cite{news,news1,news2,manoli2015only,hill2010critical,feltes2013match}), and their vulnerability has been demonstrated empirically~(e.g., \cite{stanton2013structure,mattei2016empirical}). Most research on manipulation in knockout tournaments focuses on constructive manipulation, where the goal is to ensure a favored player wins the competition. The most widely studied forms of such manipulation include tournament fixing, bribery, coalition manipulation, and their combinations. The setting often assumes that the manipulator(s) possesses information on the strengths of the players, and, particularly, for every possible match between two players, who is more likely to win. Tournament fixing refers to recalculating a completely new seeding or altering an existing one~\cite{gupta2018rigging,gupta2019succinct,kim2017can,aziz2014fixing,zehavi2023tournament,gupta2018winning,vu2009complexity,williams2010fixing}. Bribery, on the other hand, refers to influencing the results of a limited or budget-constrained number of matches~\cite{konicki2019bribery,mattei2015complexity,kim2015fixing}.
Lastly, in coalition manipulation, the coalition players can intentionally lose any of their games~\cite{russell2009manipulating,mattei2015complexity,aaai25}. 
In this work, we adopt the opposite perspective---we position ourselves as the (decent) moderator, rather than the (corrupted) manipulator.

\subsection{Robustness via Redundancy}
Our goal is to design knockout tournaments that are \emph{robust} against manipulation, with a specific focus on bribery as the form of manipulation. In particular, we aim to protect the tournament outcomes from bribery that could alter the results of a limited number of games (denoted by $k$) on any leaf-to-root path.\footnote{
Thus, in principle, we can manage a total number of manipulations much larger than 
$k$, as long as the number of manipulated games along each individual leaf-to-root path is limited to 
$k$.}
Our approach is to enlarge the tournament tree and allow multiple leaves of the tournament tree to be associated with the same player. This introduces redundancy, ensuring that the ``strongest player'' still wins the enlarged tournament, even in the presence of some manipulation. We explore the trade-off between the extent of the tournament's expansion and its level of robustness against manipulation.

By ``strongest player'', we mean a player who can defeat every other player participating in the tournament. It is important to emphasize that {\em we do not know who this player is in advance}. Our input is simply the number of players, without any information about who can beat whom, and, in particular, without knowing who the strongest player is (or even if one exists).
We choose this definition of the desired winner for three key reasons. First, a definition of a ``desired winner'' is essential (one that reflects fairness), and this is the simplest, making it the most suitable for a first study on robustness in tournament design, where the goal is to establish the foundational framework for this new area of research. Second, this definition does not rely on any knowledge of intricate estimations of which player can beat which other player (as some previous studies on bribery in tournaments do), which is very difficult---if not impossible---to determine in practice. Third, we believe the existence of a strongest player is reasonable in many contexts. For example, in a chess tournament, it is often the case that a single player can defeat all others, even though we may not know who this player is before the tournament starts (especially when a ``new star" rises to beat the reigning champion).

\subsection{Main Technical Contribution}
Deferring formal definitions to \cref{sec:model}, our main technical contribution can be summarized as follows.  (For the formal statement, see \cref{thm:binary}.)

\begin{theorem}[{\bf Informal}]\label{theorem:main}
Given a number of $n$ players and a non-negative integer $k$, there exists a complete binary tree $T$ with labeled leaves, such that for every set of $N$ players and every subjective mapping $\sigma$ from the players to the labels of the leaves the following holds:
\begin{itemize}
\item {\bf [Robustness.]} If a strongest player exists and the number of manipulated games on each leaf-to-root path is at most $k$, then this strongest player will emerge as the winner of the knockout tournament associated with $T$ and $\sigma$.
\item {\bf [Redundancy.]} The height of $T$ is at most $\lceil \frac{1}{\log_2((1+\sqrt{5})/2)}\cdot\log_2 n \rceil+3k \leq \lceil1.44\log_2 n\rceil+3k$.
\end{itemize}

Moreover, $T$ can be constructed in time linear in its size.
\end{theorem}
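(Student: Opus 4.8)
The plan is to split the argument into a combinatorial \emph{robustness criterion} for a fixed labeled tree and an explicit \emph{recursive construction} realizing the claimed height, using the feedback-coding analogy to guide the latter. First I would fix the labeled tree $T$ together with a candidate ``strongest'' label $\ell$ (the seed position occupied by the strongest player) and ask exactly when that player is guaranteed to survive to the root against every adversary that flips at most $k$ matches on each leaf-to-root path. For a node $v$ I would introduce a potential $B_\ell(v)$: the minimum, over all adversary strategies that prevent $\ell$'s player from emerging at $v$, of the maximum number of flips charged to any single subtree-root-to-leaf path. Since the strongest player beats everyone, the only ways to stop them at $v$ are to eliminate them in both children of $v$ (no flip at $v$) or to eliminate them in one child and flip the match at $v$ (cost $1$ on every path through $v$). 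This yields
\[
B_\ell(v)=\begin{cases}
\infty & v \text{ a leaf labeled } \ell,\\
0 & \ell \text{ absent from } v\text{'s subtree},\\
\min\bigl(\max(b_L,b_R),\,1+\min(b_L,b_R)\bigr) & \ell \text{ in both children},\\
1 & \ell \text{ in exactly one child},
\end{cases}
\]
where $b_L,b_R$ are the children's potentials. The first lemma I would prove is that the strongest player wins if and only if $B_\ell(\mathrm{root})\ge k+1$, and that this must hold \emph{simultaneously for every label} $\ell$, since the identity of the strongest player is unknown.

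Second, I would analyze the cost of a single label. The recursion forces both children of every relevant node to contain $\ell$, and to raise the potential from $r-1$ to $r$ one child must already reach $r$ while the other only needs $r-1$; this gives the Pascal-type recurrence $L(h,r)=L(h-1,r)+L(h-1,r-1)$ for the minimum number of $\ell$-leaves needed to reach potential $r$ at the root of a height-$h$ complete tree. The heart of the construction is to satisfy all $n$ labels at once, and here I would exploit the announced correspondence with feedback communication: identifying the winner is an adaptive ``search with lies'' in which each match is one (comparison-type) query, the feedback is the running winner, and a manipulation is an adversarial lie, for which the tolerable error fraction is exactly $1/3$. Guided by the optimal such scheme, I would build $T$ recursively, partitioning the label set between the two height-$(h-1)$ subtrees so that (i) every label is present in both subtrees and (ii) each label attains potential $r$ on one side and $r-1$ on the other. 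Balancing the two sides' loads so that the recurrence closes forces a split in the golden ratio $\phi=(1+\sqrt5)/2$, and solving the recurrence shows that a height-$h$ tree accommodates about $\phi^{\,h-3(r-1)}$ labels at robustness $r=k+1$; equivalently $h\le \lceil\log_\phi n\rceil + 3k \le \lceil 1.44\log_2 n\rceil + 3k$. The two features of this bound separate cleanly: for fixed robustness the number of labels grows like $\phi^h$ (the golden-ratio ``rate''), while each extra unit of robustness costs $3$ in height (the $1/3$ barrier).

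I expect the main obstacle to be the second step: turning the per-label recurrence into a single tree that meets the requirement for \emph{all} $n$ labels simultaneously without wasting leaves. The difficulty is that forcing every label to be present in both subtrees while also boosting its potential on one side creates competition for leaf positions, and a naive placement overshoots the budget; making the golden-ratio split respect the structural spread constraints encoded by $B_\ell$ (not merely the leaf counts) is exactly where the feedback-coding scheme and its $1/3$ threshold do the work. Once the construction is fixed, verifying robustness for every seeding $\sigma$ and every number $N$ of players reduces to checking $B_\ell(\mathrm{root})\ge k+1$ for each label via the criterion of the first step, and the linear-time bound follows because the recursion emits $T$ in a single pass proportional to its size.
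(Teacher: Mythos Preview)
Your potential function $B_\ell$ and its recursion are correct, and the equivalence ``$i_w$ is guaranteed to win iff $B_{\sigma(i_w)}(\mathrm{root})\ge k+1$ for every label $\ell$'' is a clean reformulation of robustness. But the plan diverges from the paper's argument and, crucially, does not carry out the step you yourself flag as the main obstacle.

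The paper does not set up a per-label potential or a packing problem. It writes $T_{n,k}$ down explicitly from the binary feedback protocol: each vertex carries a label $s\mid f(s)\mid e(s)$, where $f$ exhaustively deletes leading $11$'s and substrings $011$ from the binary string $s$ and $e$ counts those deletions; a vertex is a leaf exactly when $|f(s)|=r$ and $e(s)=k$, and its seed name is $f(s)$. The golden ratio enters because the seed names are precisely the length-$r$ binary strings with no two consecutive ones, of which there are $F_{r+2}\ge\phi^r$, forcing $r=\lceil\log_\phi n\rceil$. The additive $3k$ comes from a one-line potential: the quantity $r-|f(s)|+3(k-e(s))$ drops by at least one at every child and equals $r+3k$ at the root. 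Robustness is proved by a single inductive invariant on the depth of $T_{n,k}$: if at most $\ell$ manipulations occur below the vertex labeled $s\mid f(s)\mid e(s)$ and $\sigma(i_w)$ has a certain prefix $p_{k-e(s)-\ell}(f(s))$, then $i_w$ wins there. No per-label budget accounting is needed.

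In your plan, by contrast, the construction is only described as a ``golden-ratio split'' of the label set between the two subtrees; you neither specify the split nor verify that it realizes $B_\ell(\mathrm{root})\ge k+1$ for all $\ell$ simultaneously. The assertion that a height-$h$ tree accommodates about $\phi^{\,h-3k}$ labels, and in particular that each unit of robustness costs exactly $3$ in height, is stated rather than derived from your framework. Note too that your single-label recurrence $L(h,r)=L(h-1,r)+L(h-1,r-1)$ has binomial growth in $h$, not Fibonacci; the golden ratio has to come from the interaction among labels, which is exactly the part you leave open. The paper's protocol-based construction sidesteps this packing difficulty entirely by exhibiting the tree directly and reading both constants ($\phi$ from the no-consecutive-ones seed names, $3$ from the length of the deleted substring $011$) off the definition of $f$ and $e$.
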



The increase in the height of the robust tournament tree consists only of an additively linear factor in $k$ and an additive linear factor in the height of a ``standard tournament'' (being $\log_2 n$). 
This means that instead of having roughly $2n$ games in total, we have roughly $2\cdot (2^{3k}+n^{1.44})$ games in total. 
We focus in our paper on the case where $k$ is the magnitude of $\log_2 n$. This is specifically interesting because that means we can support a \emph{constant fraction} of manipulations on every path from a leaf to the root of the tournament tree. Note that our result implies that e.g.\ if we have $k=\log_2 n$, then we get a fraction of $1/4.44$ games that may be manipulated; and for larger $k$, this fraction tend to $1/3$.
In turn, this implies that the size of the tournament will be polynomially larger than than a ``standard'' tournament where the tournament tree has depth $\log_2 n$.


Additionally, before proving \cref{thm:binary}, we first prove the following result, which may be of independent interest (see \cref{thm:ternary}): If the constructed tournament allows matches involving three players instead of just two, then the height of the tree can be bounded by $\log_2n+2k$. The proof of this result is simpler than that of \cref{thm:binary}, yet it effectively illustrates the core ideas. 

It is important to note that the exponential dependence on 
$k$ in the total number of games is \textit{unavoidable}, as a linear dependence on 
$k$ in the tree height is also inevitable. In fact, if the height were less than 
$k$, then \textit{all} games in the tournament could be manipulated, making it possible to arbitrarily determine the winner.

\subsection{Comparison with Other Methods to Ensure Robustness}
Next, the following discussion is in order: we aim to compare our approach with two alternative methods that might naturally arise in the context of robustness. The most straightforward way to protect against 
$k$ manipulations is to repeat each game 
$2k+1$ times and declare the player who wins the majority as the match winner. Most critical, this alters the competition format. Essentially, we redefine what it means to play a game (e.g.\ best-of-$2k+1$ instead of best-of-one) within the tournament rather than adapting the tournament itself.
Hence, this is a conceptually different approach.
Moreover, the biggest drawback of this approach is that we cannot support a constant fraction of manipulated games in any tournament run, since the number of games in each run is $(2k+1)\log_2 n$. (Indeed, we have that $k/(2k+1)\log_2 n$ is proportional to $1/\log_2 n$ and therefore not a constant fraction.) 

Another possible way to protect against \( k \) manipulations is to use a \((k+1)\)-elimination generalization of a knockout tournament. Informally, this involves running a standard knockout tournament, where the losers of each round move into a second identical tournament. The losers of the second tournament then enter a third one, and this process continues until the \((k+1)\)-th tournament. 
The winner of the \((k+1)\)-th tournament advances to compete against the winner of the \( k \)-th tournament, and the winner of that match moves on to face the champion of the \( (k-1) \)-th tournament. This progression continues until the final match, where the overall tournament winner is decided. This setup ensures that the strongest player cannot be eliminated before entering \((k+1)\)-th tournament. However, as is, it does not protect against \( k \) manipulations, as one could still rig the final game to ensure the strongest player loses.
To address this, the final game must be repeated \( 2k+1 \) times, with the winner determined by majority---similar to the previous approach. Likewise, the game before the final must be played \( 2(k-1)+1 \) times, and this pattern continues throughout the tournaments. However, this method is very cumbersome and structurally inelegant.

\subsection{Surprising Connection to Communication Protocols}
Our paper is inspired by Karchmer Wigderson games (KW-games)~\cite{karchmer1990monotone}, which show a correspondence between Boolean circuits and communication protocols.
Specifically, to prove \cref{thm:ternary,thm:binary}, we construct tournament trees based on a protocol for communication with a feedback channel that are resilient against adversarial noise (see~\cite{Berlekamp1964BlockCW}). 
We consider the setting where Alice aims to transmit a binary message to Bob over a communication system consisting of a transmission channel and a feedback channel. Specifically, whenever Alice sends a bit through the transmission channel, the feedback channel relays the bit received by Bob back to Alice.
Adversarial noise in this model allows a third party, Eve, to corrupt up to \( k \) bits of her choosing. 

We present a non-trivial translation of a communication protocol---one that guarantees Bob receives Alice’s message correctly (without errors)---into a knockout tournament format. In this translation, players are represented as binary strings, corresponding to their respective leaf-to-root paths in the tournament tree, while manipulated games are treated as errors introduced by Eve. We then prove that the constructed tournaments satisfy both properties stated in Theorem~\ref{theorem:main}.

This innovative link between tournament structures and communication protocols could be of independent interest, potentially inspiring future applications in tournament design and beyond.

\section{Tournaments and Manipulation}\label{sec:model}

In this section, we introduce the formal definitions of tournaments, tournament trees, manipulation, robustness, and all related concepts that we use in this paper. We remark that from now on, we will use ``$\log$'' to refer to the logarithm with base two.


\subsection{Basic Knock-Out Tournaments}
We use the following model for knock-out tournaments. 
A \emph{tournament tree} $T$ is a rooted full binary tree\footnote{A full binary tree is a binary tree where every inner vertex has either zero or two children.} where each leaf $i$ is associated with a bit string $s_i\in\{0,1\}^\ast$. We call that bit string the \emph{seed position names} of the tournament tree $T$. We denote with $S_T$ the set of seed positions of $T$.
We remark that we allow different leaves of $T$ to have the same seed position name. We call a tournament tree $T$ \emph{balanced} if $T$ is a rooted complete binary tree, that is, a full binary tree where all leaves have the same distance to the root. Note that we can turn any tournament tree into a balanced tournament tree by replacing leaves that are too close to the root with balanced binary trees of appropriate length where every leaf of the replacement tree has the same seed position name as the leaf of the original tournament tree that is replaced.
A tournament $\mathcal{T}$ consists of the following.
\begin{itemize}
    \item A set of $n$ \emph{players} $N=\{1,\ldots, n\}$. 
    \item A tournament tree $T$ with $n$ different seed positions.
    \item A \emph{winner function} $w:N\times N\rightarrow N$ that determines the winner of a game between two players in $N$.
    Formally, a function $w$ from $N\times N$ is a \emph{winner function} if for all $(i,j)\in N\times N$ we have that $w(i,j)\in\{i,j\}$. We remark that we allow players to play against themselves, that is, for all $i\in N$ we have that $w(i,i)=i$.
    \item A \emph{tournament seeding} $\sigma:N\rightarrow S_T$, which is a bijective mapping between the player names and the seed position names of $T$.
\end{itemize}

The tournament $\mathcal{T}=(N,T,w,\sigma)$ is conducted in rounds as follows. 
Initially, each player $i$ is assigned to all leaves with seed position $\sigma(i)$.
As long as the generalized tournament tree $T$ has at least two leaves, every two players $i,j\in N$ that are assigned to leaves with a common parent in the tree $T$ plays against each other. 
The winner $w(i,j)$ is promoted to the common parent. Then, the leaves are deleted from the tree $T$. Eventually, only one player remains, and this player is declared the winner of $\mathcal{T}$.

\subsection{Ternary Knock-Out Tournaments}
A \emph{ternary tournament tree} $T$ is a rooted tree where each leaf $i$ is associated with a bit string $s_i\in\{0,1\}^\ast$, the root has degree three and each inner vertex (except the root) has degree four. This is the main difference to the basic setting. The remaining concepts are analogous.
A generalized tournament $\mathcal{T}$ consists of the following.
\begin{itemize}
    \item A set of $n$ \emph{players} $N=\{1,\ldots, n\}$. 
    \item A generalized tournament tree $T$ with $n$ different seed positions.
    \item A \emph{winner function} $w: N\times N\times N\rightarrow N$ that determines the winner of a game between three players of players in $N$.
    Similar to the basic setting, a function $w$ from $N\times N\times N$ is a \emph{winner function} if for all $(i_1,i_2,i_3)\in N\times N$ we have that $w(i_1,i_2,i_3)\in\{i_1,i_2,i_3\}$. Again, we remark that we do not require $i_1$, $i_2$, and $i_3$ to be distinct.
    \item A \emph{tournament seeding} $\sigma:N\rightarrow S_T$, which is a bijective mapping between the player names and the seed positions of $T$.
\end{itemize}

The ternary tournament $\mathcal{T}=(N,T,w,\sigma)$ is conducted in rounds as follows. 
Initially, each player $i$ is assigned to all leaves with seed position $\sigma(i)$.
As long as the ternary tournament tree $T$ has at least three leaves, every three players $i_1,i_2,i_3$ that are assigned to leaves with a common parent in the tree $T$ play against each other. 
The winner $w(i_1,i_2,i_3)$ is promoted to the common parent. Then, the leaves are deleted from the tree $T$. Eventually, only one player remains, and this player is declared the winner of $\mathcal{T}$.

\subsection{Manipulation in Tournaments}
In this section, we formally define \emph{manipulations} in a (ternary) tournament.
The conduction of a tournament $\mathcal{T}=(N,T,w,\sigma)$ is manipulated if at least one game at a vertex in $T$ is manipulated.
We say that the game at vertex $v$ in  $T$ is \emph{manipulated} during the conduction of $\mathcal{T}$ if the following happens.
\begin{itemize}
    \item In some round $r$ during the conduction of $\mathcal{T}$, all children of $v$ are leaves. Let $i,j\in N$ be the players assigned to the leaves of $v$. 
    
    In the ternary case, let $i_1,i_2,i_3$ be the three players assigned to the leaves of $v$.
    \item In round $r+1$, player $i^\star\in \{i,j\}$ with $i^\star\neq w(i,j)$ is assigned to $v$. 
    
    In the ternary case, player $i^\star\in \{i_1,i_2,i_3\}$ with $i^\star\neq w(i_1,i_2,i_3)$ is assigned to $v$ in round $r+1$.
\end{itemize}

\subsection{Manipulation Robustness in Tournaments}
In this section, we formally define our problem of making a tournament robust when there exists a strongest player. 
Given a tournament tree $T$ we call a path from a leaf of $T$ to the root a \emph{tournament run}.
We say that a winner function $w$ \emph{exhibits a strongest player} if there exists $i_w\in N$ such that $i_w=w(i_w,j)=w(j,i_w)$ for all $j\in N$. In the ternary case, we say that a winner function $w$ \emph{exhibits a strongest player} if there exists $i_w\in N$ such that $i_w=w(i_1,i_2,i_3)$ for all $\{i_1,i_2,i_3\}\subseteq N$ with $i_w\in \{i_1,i_2,i_3\}$. We call $i_w$ the \emph{strongest player}.
We consider the following problem.

\problemdefopt{(Ternary) Tournament Robustness}{Integers $n$ and $k$.}{Find a (ternary) tournament tree $T$ with $n$ seed positions, such that the following holds:

For each set of players $N$ of size $n$, each winner function $w$ that exhibits a strongest player $i_w\in N$, and each tournament seeding $\sigma$ we have that $i_w$ wins $\mathcal{T}=(N,T,D,\sigma)$, even if the outcome of up to $k$ games in each tournament run in $T$ are manipulated when $\mathcal{T}$ is conducted.
}



\section{Communication with Feedback Channel and Adversarial Noise}
In this section, we introduce a simple communication protocol that is the main inspiration for our approach to make tournaments robust against manipulation.
We consider the setting where Alice wishes to transmit a (binary) message to Bob via a communication line consisting of a transmission channel and a feedback channel. This means that whenever Alice sends a bit of the message to Bob (via the transmission channel), the feedback channel sends the bit received by Bob back to Alice. We assume that Eve can manipulate the messages sent by Alice to Bob via the transmission channel, but Eve cannot manipulate the feedback channel. For an illustration see \cref{fig:comm}. This manipulation ability of Eve is also called \emph{adversarial noise}.

\begin{figure}[t]
\begin{center}
\includegraphics[scale=1]{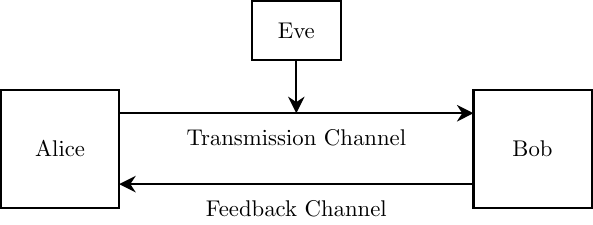}
%
    \end{center}
    \caption{Illustration of the communication setting with feedback channel and adversarial noise.}\label{fig:comm}
\end{figure}

\subsection{Using an Extra Error Symbol}\label{sec:protocol1}

We recall a simple protocol that uses an additional symbol that Alice may use to indicate that the previously transmitted bit was manipulated. Assume that the alphabet of the message is~$\{0,1\}$ and the additional symbol is $\bot$. Intuitively, whenever Alice realizes (using the feedback channel) that the symbol received by Bob is not the one that she sent (that is, Eve manipulated the transmission), she sends a $\bot$ symbol to indicate to Bob that the last received symbol was manipulated. Bob then disregards the last received symbol and Alice repeats the transmission of the manipulated symbol. However, since Eve can also turn $\bot$ symbols into ones or zeros, Alice has to count how many manipulations still need to be recognized by Bob, and keep sending $\bot$ symbols until Bob has correctly identified all manipulations.

Formally, the protocol works as follows. For a visualization see \cref{fig:flow1}.
We assume that Alice has a manipulation counter that is initially set to zero. Alice uses this counter to keep track of how many manipulations still need to be recognized by Bob.
\begin{enumerate}
    \item Send the current bit (zero or one) of the message (starting with the first bit).\label{step:1}
    \item If the bit is correctly received by Bob, consider the next bit of the message to be the current bit, and move to step \ref{step:1}.
    \item If the bit is incorrectly received by Bob, then do the following:
    \begin{itemize}
        \item If Bob received the $\bot$ symbol instead of zero or one, consider the previous bit of the message to be the current bit (if there is no previous bit, the current bit stays the first bit), and move to step \ref{step:1}.
        \item If Bob received an incorrect bit (zero instead of one or vice versa), then send symbol $\bot$ to Bob and increase the manipulation counter by one.
    \end{itemize}
    \item If the symbol $\bot$ is correctly received by Bob, then decrease the manipulation counter by one.\label{step:4}
    \item If the symbol $\bot$ is incorrectly received by Bob, then increase the manipulation counter by one.
    \item If the manipulation counter is larger than zero, then send symbol $\bot$ to Bob and move to step~\ref{step:4}.
    \item If the manipulation counter is zero, then move to step \ref{step:1}.
\end{enumerate}

 It is folklore that Bob can reconstruct the original message as follows. Whenever a bit (zero or one) is received, append this bit to the message (starting with an empty message). Whenever an $\bot$ is received, delete the last bit from the message.

\begin{figure}
    \centering
    
\includegraphics[scale=1]{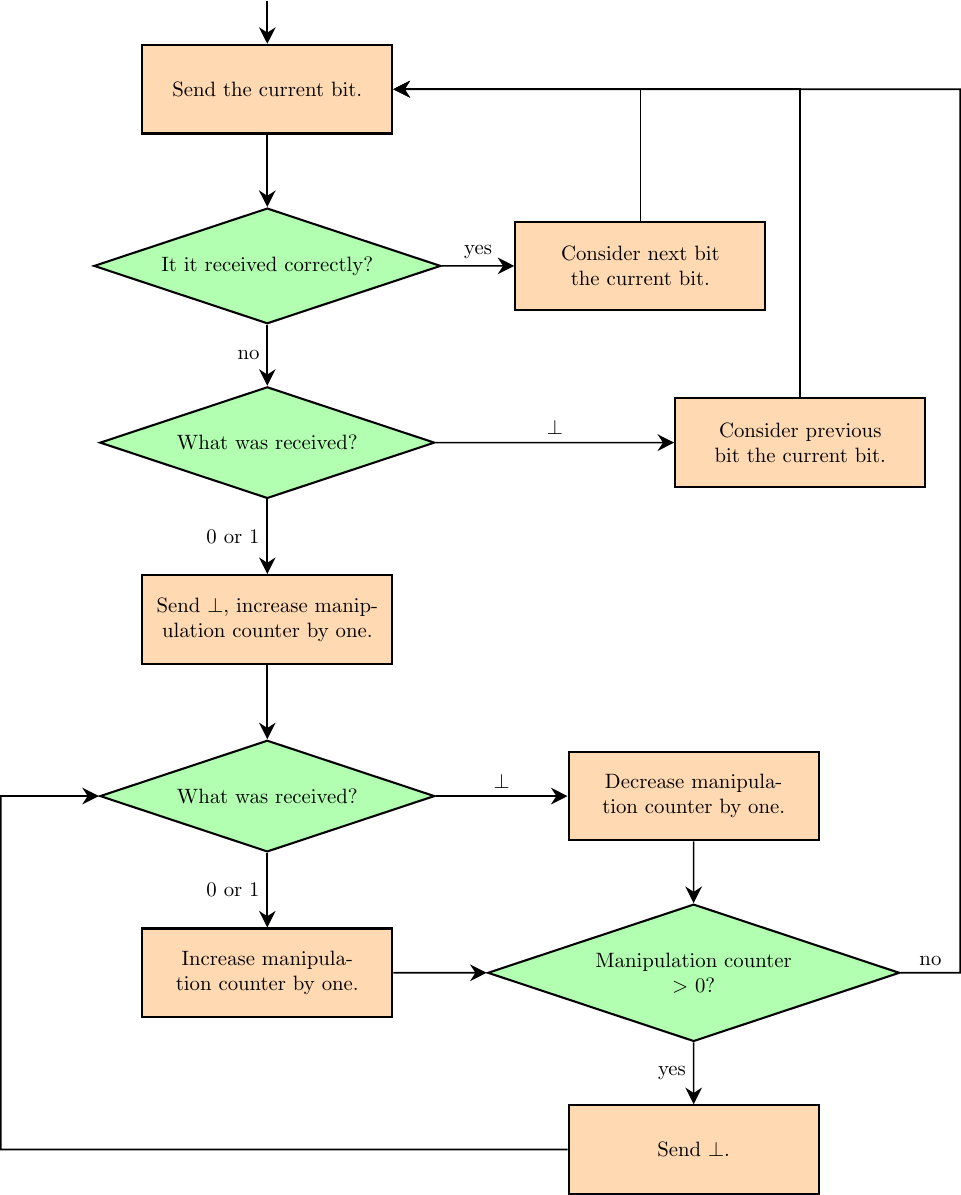}
    \caption{Flowchart of the protocol described in \cref{sec:protocol1}. Initially, the current bit is the first bit of the message and the manipulation counter is set to zero.}
    \label{fig:flow1}
\end{figure}

\subsection{Using a Binary Alphabet}\label{sec:protocol2}

We now recall a protocol that does not need an additional symbol. Intuitively, we replace the $\bot$ symbol from the previous protocol with the string 11. This requires the original message string to not contain any consecutive ones.
We remark that we could also use a longer sequence of ones to replace the $\bot$ symbol, which would allow the original message to include shorter sequences of ones. However this also requires Alice to send more symbols to indicate manipulations to Bob. Similarly to the first protocol, Alice has to keep track on how many manipulations still need to be recognized by Bob. However, since indicating a manipulation to Bob requires Alice to correctly transmit two symbols, the counter value does not directly correspond to the number of manipulations that Bob needs to recognize, but rather then number of consecutive ones that Bob needs to correctly receive.

The protocol works as follows. 
We assume that Alice has a manipulation counter that is initially set to zero.
\begin{enumerate}
    \item Send the current bit of the message (starting with the first bit).\label{step:11}
    \item If the bit is correctly received by Bob, consider the next bit of the message to be the current bit, and move to step \ref{step:11}.
    \item If the bit is incorrectly received by Bob, then do the following:
    \begin{itemize}
        \item If Bob received a one instead of a zero and the previous bit is a one, consider the previous previous bit of the message to be the current bit, and move to step \ref{step:11}.
        \item If Bob received a one instead of a zero and the previous bit is a zero, consider the previous bit of the message to be the current bit, send a one to Bob, and increase the manipulation counter by one.
        \item If Bob received a one instead of a zero and it was the first bit, send a one to Bob, and increase the manipulation counter by one.
        \item If Bob received a zero instead of a one, then send a one to Bob and increase the manipulation counter by two.
    \end{itemize}
    \item If the one is correctly received by Bob, then decrease the manipulation counter by one.\label{step:41}
    \item If the one is incorrectly received by Bob, then increase the manipulation counter by two.
    \item If the manipulation counter is larger than zero, then send a one to Bob and move to step \ref{step:41}.
    \item If the manipulation counter is zero, then move to step \ref{step:11}.
\end{enumerate}

This protocol was considered by Berlekamp~\cite{Berlekamp1964BlockCW}. Bob can reconstruct the original message as follows. Whenever a bit (zero or one) is received, append this bit to the message (starting with an empty message). If the message ends with two consecutive ones, then delete the last three bits from the message.


\section{Tournament Robustness}

In this section, we first explain how the conduction of a tournament relates to communication protocols. Then we introduce a ternary robust tournament tree that is inspired by the communication protocol described in \cref{sec:protocol1}. Finally, we present a binary robust tournament tree based on the communication protocol described in \cref{sec:protocol2}.

\subsection{Tournaments and Communication Protocols}
Consider a binary tree where the leaves are labeled with binary strings and the inner vertices are labeled with the largest common prefix in the following way. The root is labeled with the empty string, and for each inner vertex, the left child adds a 0 to the label and the right child adds a 1 to the label of that inner vertex. For an example see \cref{fig:labeling}.

\begin{figure}[t]
\centering

\includegraphics[scale=1]{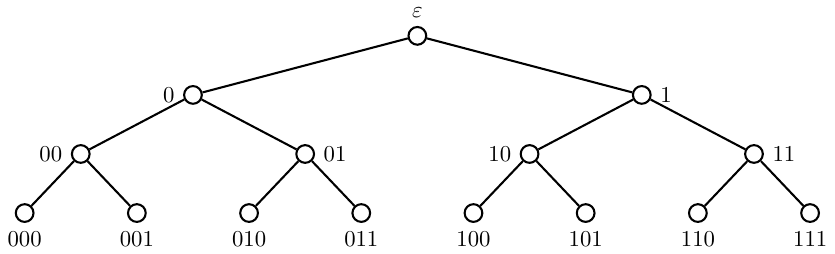}
%
%
%
%
%
%
    \caption{Visualization of a binary tree where the root is labeled with an empty string $\varepsilon$ and the left child of every inner vertex adds a 0 to the label of its parent and the right child adds a 1 to the label of its parent.}\label{fig:labeling}
\end{figure}

We can interpret every path from the root to a leaf as a transmission of a bit string from Alice to Bob, where ``turning left'' means that a zero was transmitted, and ``turning right'' means that a one was transmitted.

Similarly, we can interpret the tournament run of the winning player as the transmission (in reverse order) of the label of its seed position. More specifically, the outcome of the final game corresponds to the transmission of the first bit. If the winner of the final game is the player from the left child, the first bit is a zero, and otherwise, it is a one. Then the semi-final game where the overall winner of the tournament participated corresponds to the transmission of the second bit, and so on.
A manipulation of a game then corresponds to adversarial noise, that is, Eve flipping the bit of the transmission. We can use ideas analogous to ones used in communication protocols that are robust to adversarial noise to make tournaments robust to manipulation of games. In the case where a third symbol is used in the communication protocol, the corresponding tournament will be ternary and the third child of an inner node of the tournament tree can be interpreted as transmitting the error symbol.

\subsection{The Ternary Case}

In the following, we describe recursively how to obtain a tournament tree $T_{n,k}$ for a tournament with $n=2^r$ players. Here, $k$ denotes the number of manipulations that may occur in each tournament run. We will prove the following.

\begin{theorem}\label{thm:ternary}
Given integers $n$ and $k$, a tournament tree $T_{n,k}$ of height at most $\log n + 2k$ can be computed in $O(|T_{n,k}|)$ time such that the following holds.
Let $N$ be a set of $n$ players, let $w$ be a winner function that exhibits a strongest player $i_w\in N$, and let $\sigma$ be a seeding for the players to the seed position names of $T_{n,k}$.
    If on each tournament run in $\mathcal{T}=(N,T_{n,k},w,\sigma)$ we have that at most $k$ games are manipulated, then $i_w$ wins the tournament.
\end{theorem}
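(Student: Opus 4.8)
The plan is to construct $T_{n,k}$ by directly translating the feedback protocol of \cref{sec:protocol1} into a ternary tree, and then to argue robustness by showing that any legal sequence of $\le k$ manipulations along a tournament run corresponds to a run of that protocol with $\le k$ adversarial errors, which the protocol is guaranteed to decode correctly. Concretely, I would define the tree so that each root-to-leaf path is a transcript of symbols from $\{0,1,\bot\}$ (left child $=0$, right child $=1$, third child $=\bot$), and the seed position name of a leaf is the message that Bob reconstructs from that transcript via the folklore decoding rule (append $0/1$, delete last bit on $\bot$). The key design point is that the tree should contain exactly those transcripts that the protocol can actually produce when Alice transmits one of the $n=2^r$ messages of length $r$ and Eve introduces at most $k$ errors; this is what simultaneously controls the height (at most $\log n + 2k$, since each of the $r=\log n$ genuine message bits costs one symbol and each of the $\le k$ errors costs at most two extra symbols in the worst case) and guarantees that every length-$r$ message appears as a seed position.

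The heart of the proof is the robustness argument, which I would run by the communication correspondence described just before the theorem. Reading the winning player's tournament run in reverse gives a sequence of transmitted symbols; I claim that promoting a non-winner at a manipulated game is exactly the act of Eve flipping a transmitted symbol, and that a path surviving $\le k$ manipulations is a protocol transcript with $\le k$ injected errors. The crucial structural fact to establish is that the strongest player $i_w$, wherever $\sigma$ places it, is \emph{guaranteed} to win every game it plays along its path unless that particular game is manipulated, so the subtree that $i_w$ can be forced down is precisely governed by the at-most-$k$ manipulation budget. I would then invoke the correctness of the protocol (Bob always reconstructs Alice's message under $\le k$ adversarial errors) to conclude that the message decoded from $i_w$'s actual run equals the message Alice intended to send, i.e.\ $i_w$ is promoted all the way to the root. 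Here I must be careful to verify the direction of the reduction: the protocol guarantees Bob decodes correctly, and I need that this forces the physical winner; the cleanest way is an invariant argument showing that after every round, the player sitting at each internal vertex $v$ whose label (partial transcript) is a protocol-consistent prefix for $i_w$'s message must be $i_w$ itself unless strictly more than the allotted manipulations were spent on the subtree of $v$.

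The main obstacle I expect is precisely this translation of the protocol's \emph{sender-side} decoding guarantee into a \emph{tree-side} promotion guarantee, because the protocol is adaptive: Alice's next symbol depends on the feedback, so the "tree of all transcripts" is not simply a product structure but is pruned according to the protocol's state (current bit index and manipulation counter). I would handle this by defining the tree node-by-node following the protocol's state machine — each internal node carries the protocol state, its three children correspond to Bob receiving $0$, $1$, or $\bot$, and I only keep children reachable from that state under some adversarial action — and then proving by induction on the distance to the root that the manipulation counter plus remaining message length bounds the remaining depth, giving the $\log n + 2k$ height bound. A secondary technical point is confirming that the resulting tree is full ternary (root degree three, inner degree four, matching the ternary tournament-tree definition) and that every length-$r$ message indeed occurs, so that $\sigma$ can be any bijection; this is a finiteness/completeness check on the construction rather than a deep difficulty. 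The linear-time construction and the $|T_{n,k}|$ bound then follow because the recursion visits each node a constant number of times.
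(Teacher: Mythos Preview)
Your proposal is correct and follows essentially the same approach as the paper: build the ternary tree from the protocol transcripts over $\{0,1,\bot\}$, bound the height by a potential that drops by one per step (your ``remaining message length plus twice remaining error budget'' is exactly the paper's $r-|f(s)|+2(k-e(s))$), and prove robustness by an induction from the leaves that maintains an invariant tying each vertex to a prefix of $\sigma(i_w)$ under a manipulation budget. The only stylistic difference is that the paper labels vertices by the \emph{receiver-side} decoding data $(f(s),e(s))$ rather than the sender-side protocol state you describe; since the two coincide this changes nothing, and your flagged ``secondary technical point'' about inner vertices with fewer than three children is real and handled (as in the binary case) by duplicating a child's subtree.
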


Assume we are given two integers $n$ and $k$. We start by describing how to construct the tournament tree $T_{n,k}$. We label each vertex of $T_{n,k}$ with three parts:
\[
\tlabel{s},
\]
where $s$ is a string on the alphabet $\{0,1,\bot\}$, $k$ is a non-negative integer, $f:\{0,1,\bot\}^*\rightarrow\{0,1\}^*$ is a function mapping strings on alphabet $\{0,1,\bot\}$ to binary strings, and $e:\{0,1,\bot\}^*\rightarrow\mathbb{N}$ is a function mapping strings on the alphabet $\{0,1,\bot\}$ to non-negative integers. The functions $f$ and $e$ are defined as follows.
\begin{itemize}
    \item The function $f$ takes as input a string $s$ and exhaustively removes all substrings $0\bot$ and $1\bot$, as well as all leading $\bot$-symbols, from $s$.
    \item The function $e$ takes as input a string and counts how many times the function $f$ removes a substring of the form $0\bot$ or $1\bot$, or a leading $\bot$-symbol, from $s$.
\end{itemize}

Intuitively, the first part of the label corresponds to the transmission sequence as it is sent by Alice to Bob. The second part corresponds to the sequence that is reconstructed by Bob and believed by him to be the unmanipulated string. The third part counts how many manipulations were already identified by Bob.

We label the root of the tournament tree $T_{n,k}$ with label $\tlabel{\varepsilon}$, where $\varepsilon$ is the empty string. The label simplifies to $\tttlabel{\varepsilon}{\varepsilon}{0}$. A vertex with label $\tlabel{s}$ such that $|f(s)|\le r$ and $e(s) \le k$ has up to three children.
\begin{itemize}
    \item If $|f(s)|< r$, then the vertex has one child with label $\tlabel{s0}$.
    \item If $|f(s)|< r$, then the vertex has one child with label $\tlabel{s1}$.
    \item If $e(s)< k$ and $|f(s)|>0$, then the vertex has one child with label $\tlabel{s\bot}$.
\end{itemize}
Here $s0$, $s1$, and $s\bot$ are the strings $s$ with a $0$, a $1$, and a $\bot$-symbol appended, respectively. We give an illustration in \cref{fig:ternary1}. 
Note that a vertex with label $\tlabel{s}$ such that $|f(s)| = r$ and $e(s)=k$ is a leaf.
Then we consider the bit string $f(s)$ to be the seed position name associated with the leaf.

\begin{figure}[t]
\begin{center}

\includegraphics[scale=1]{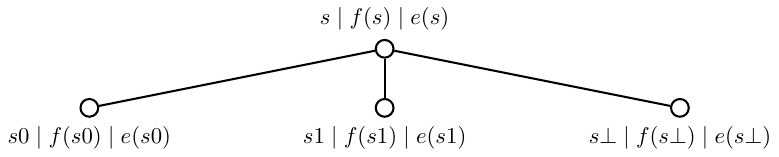}
%
%
    \end{center}
    \caption{Visualization of the recursive construction of $T_{n,k}$ in the ternary case.}\label{fig:ternary1}
\end{figure}

Note that given the label $\tlabel{s}$ of a vertex, we can compute the labels of the children in constant time. To do this, we do not recompute the functions $f$ and $e$ but rather compute the values of $f$ and $e$ in the labels of the children from the values in the label $\tlabel{s}$ in a straightforward manner. This allows us to observe the following.


\begin{observation}\label{obs:timeternary}
    The tournament tree $T_{n,k}$ can be computed in $O(|T_{n,k}|)$ time.
\end{observation}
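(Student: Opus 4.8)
The statement to prove is \cref{obs:timeternary}: the tournament tree $T_{n,k}$ can be computed in $O(|T_{n,k}|)$ time. This is essentially a claim that the recursive construction incurs only constant overhead per vertex, so the total time matches the tree size up to a constant factor.

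The plan is to argue that a single traversal of the tree (say, a depth-first or breadth-first generation starting from the root) produces every vertex exactly once, and that each vertex is processed in constant time. First I would establish that the construction, as described, visits each vertex once: the root is created with its explicit label $\tttlabel{\varepsilon}{\varepsilon}{0}$, and every non-leaf vertex with label $\tlabel{s}$ spawns its (at most three) children with labels $\tlabel{s0}$, $\tlabel{s1}$, and $\tlabel{s\bot}$ according to the stated conditions on $|f(s)|$ and $e(s)$. Since distinct vertices carry distinct strings $s$ (the string records the full root-to-vertex path in the alphabet $\{0,1,\bot\}$, and each edge appends exactly one symbol), no vertex is generated twice, and the number of vertex-creation operations equals $|T_{n,k}|$.

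The key step is the constant-time-per-vertex claim, which rests on the remark preceding the observation: rather than recomputing $f(s)$ and $e(s)$ from scratch for each child, we maintain the triple $(s, f(s), e(s))$ incrementally. I would verify that each of the three child labels can be derived from the parent's triple by examining only $O(1)$ data at the tail of the stored values. Appending a $0$ or a $1$ simply appends that symbol to both $s$ and $f(s)$ and leaves $e$ unchanged. Appending a $\bot$ is where the bookkeeping matters: by the definition of $f$, appending $\bot$ either cancels the last symbol of $f(s)$ (when $f(s)$ ends in $0$ or $1$, i.e.\ when $|f(s)|>0$, which is exactly the condition under which the $\bot$-child exists) and increments $e$ by one, or does nothing to a leading $\bot$. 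In all cases the update inspects and modifies only the last symbol of the reconstructed string and adjusts the counter by $\pm 1$, hence takes constant time. I would also note that the guard conditions $|f(s)|<r$, $e(s)<k$, and $|f(s)|>0$ determining which children exist are tests on the maintained quantities $|f(s)|$ and $e(s)$, each evaluable in $O(1)$.

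The main obstacle — really the only subtle point — is justifying that these incremental updates to $f$ and $e$ are correct and genuinely constant-time, i.e.\ that the effect of appending a symbol on the exhaustive-removal process defining $f$ is a purely local operation on the stored representation and never triggers a cascade of removals. The resolution is to observe that maintaining $f(s)$ as a stack suffices: since $f(s)$ is by construction already fully reduced (it contains no removable substring $0\bot$ or $1\bot$ and no leading $\bot$), appending a single symbol to $s$ can create at most one new removable occurrence at the tail, so at most one pop (for a $\bot$) or one push (for a $0$ or $1$) is needed to restore the reduced form. Once this locality is established, summing the constant per-vertex cost over all $|T_{n,k}|$ vertices yields the $O(|T_{n,k}|)$ bound, completing the argument.
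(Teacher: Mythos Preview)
Your proposal is correct and follows exactly the approach the paper takes: the paper's entire justification is the one-sentence remark immediately preceding the observation, namely that the values $f(s)$ and $e(s)$ for each child can be computed in constant time from the parent's label rather than recomputed from scratch, and you have simply fleshed out that remark in detail (the stack view of $f(s)$, the single pop/push per appended symbol, and the $O(1)$ guard checks). Nothing further is needed.
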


Now we analyze the height of $T_{n,k}$.

\begin{lemma}\label{lem:ternaryheight}
    The height of $T_{n,k}$ is $\log n + 2k$.
\end{lemma}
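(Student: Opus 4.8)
The plan is to track how the two quantities $|f(s)|$ and $e(s)$ evolve as one descends a single edge at a time from the root, and to show that they are completely determined by the number of "binary" symbols (from $\{0,1\}$) and the number of $\bot$-symbols appended so far. Concretely, I would first record the local update rules for a single appended symbol. Appending a $0$ or a $1$ never creates a removable substring $0\bot$ or $1\bot$, so it merely pushes one binary symbol: $|f(sb)| = |f(s)| + 1$ and $e(sb) = e(s)$ for $b \in \{0,1\}$. A $\bot$-child, on the other hand, is created only when $|f(s)| > 0$, so the reduced string $f(s)$ ends in a $0$ or a $1$; appending $\bot$ then cancels exactly that trailing symbol, giving $|f(s\bot)| = |f(s)| - 1$ and $e(s\bot) = e(s) + 1$. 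The one point needing a short justification is that $f$ and $e$ behave \emph{locally}, i.e.\ that $f(s\bot)$ is obtained from the already-reduced string $f(s)$ by deleting its last symbol (and that exactly one removal is counted); this is the confluence of the cancellation rewriting, which I would either cite as folklore or establish by modeling $f$ as a left-to-right stack scan (push on $0,1$; pop on $\bot$) and observing that it coincides with exhaustive removal.

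Next I would prove the following invariant by induction on the depth of a vertex: if the string $s$ labeling a vertex contains $a$ symbols from $\{0,1\}$ and $b$ symbols $\bot$, then $|f(s)| = a - b$ and $e(s) = b$. The base case is the root ($a = b = 0$, $|f| = e = 0$), and the inductive step is immediate from the update rules above. The point that is used crucially here is that every $\bot$ along a root-to-leaf path is a "valid pop", created only when $|f(s)| > 0$, so no $\bot$ is ever a leading or uncancelled symbol; this is precisely what makes each $\bot$ decrease $|f|$ by one and increase $e$ by one.

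With the invariant in hand, I would characterize the leaves. A vertex has no children exactly when neither a binary child nor a $\bot$-child is created, i.e.\ when $|f(s)| = r$ (blocking the $0$- and $1$-children, recalling $|f(s)| \le r$ throughout) and, since $r = \log n \ge 1$ forces $|f(s)| = r > 0$, when additionally $e(s) = k$ (blocking the $\bot$-child); this matches the stated leaf condition. Substituting $|f(s)| = a - b = r$ and $e(s) = b = k$ into the invariant yields $b = k$ and $a = r + k$, so every leaf sits at depth $a + b = r + 2k = \log n + 2k$. Moreover, since $\bot$-children require $e(s) < k$ and binary children require $|f(s)| < r$, we have $b \le k$ and $a = |f(s)| + b \le r + k$ along every path, so all paths are finite and no leaf lies deeper than $r + 2k$; exhibiting one leaf at that depth (e.g.\ the path $0^r$ followed by $k$ repetitions of $\bot 0$) confirms the bound is attained. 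Hence the height is exactly $\log n + 2k$.

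I expect the only genuine obstacle to be the locality claim about $f$ and $e$ under appending a single symbol; everything else is bookkeeping driven by the invariant. Handling this step cleanly, via the stack reformulation so that the effect of each appended symbol is manifestly an incremental push or pop rather than a global re-reduction, is the part I would treat most carefully.
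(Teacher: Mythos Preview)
Your proposal is correct and takes essentially the same approach as the paper: both arguments rest on the local update rules $|f(sb)|=|f(s)|+1$, $e(sb)=e(s)$ for $b\in\{0,1\}$ and $|f(s\bot)|=|f(s)|-1$, $e(s\bot)=e(s)+1$, and then read off the height from the leaf conditions. The paper packages this as a potential $r-|f(s)|+2(k-e(s))$ that drops by exactly one along every edge (which, under your invariant, is just $r+2k-(a+b)$), whereas you unfold the same computation into the explicit counts $a,b$; your extra care about confluence via the stack interpretation is a point the paper simply takes for granted.
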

\begin{proof}
Recall that $n=2^r$.
    Let $v^\star$ be a vertex in $T_{n,k}$ that is labeled with $\tlabel{s}$. We associate with $\tlabel{s}$ the value $r-|f(s)|+2(k-e(s))$. By construction, the vertex $v^\star$ in $T_{n,k}$ has up to three children, labeled $\tlabel{s0}$, $\tlabel{s1}$, and $\tlabel{s\bot}$, respectively. We have that $\tlabel{s0}$ is associated with the value $r-|f(s0)|+2(k-e(s0))= r-|f(s)|+2(k-e(s))-1$. The situation with $\tlabel{s1}$ is analogous. Furthermore, we have that $\tlabel{s\bot}$ is associated with the value $r-|f(s\bot)|+2(k-e(s\bot))= r-|f(s)|+2(k-e(s))-1$. It follows that each child of a vertex is associated with a value that is by one smaller than that of the vertex itself. Clearly, the value cannot become negative. The root of $T_{n,k}$ is associated with value $r+2k$. The lemma statement follows.
\end{proof}

Now we show that for every set $N$ of $n$ players, every winner function $w$ that exhibits a strongest player $i_w$, and every seeding for the players to the seed position names of $T_{n,k}$, the tournament $\mathcal{T}=(N,T_{n,k},w,\sigma)$ is robust against $k$ manipulations in every tournament run. This means that $i_w$ wins $\mathcal{T}$ even if up to $k$ games in every tournament run are manipulated.
To this end, we compare the conduction of $\mathcal{T}$ to the conduction of a ``normal'' tournament with $n$ players.
Let $T_n$ denote the ordered rooted complete binary tree with $n$ leaves.
Now we label the root of $T_n$ with the empty string, and for each inner vertex, the left child adds a 0 to the label and the right child adds a 1 to the label. For an example see \cref{fig:labeling}.
Now every leaf of $T_n$ be labeled with a different bit string of length $\log n$ that serves as the seed position name of that leaf.
From the definitions of $T_n$ and $T_{n,k}$ it follows that both tournament trees have the same set of seed position names.

To show that $T_{n,k}$ is robust, we introduce some additional notation. 
\begin{definition}
Let $s$ be a binary string and let $\ell$ be some integer. If $|s|\ge \ell\ge 0$, then we denote with $p_\ell(s)$ the prefix of length $|s|-\ell$ of $s$, that is, the string obtained from $s$ by removing the last $\ell$ bits. If $\ell>|s|$, then $p_\ell(s)$ is the empty string.
\end{definition}
In other words, $p_\ell(s)$ is the prefix obtained from $s$ by removing the number of bits that correspond to detecting $\ell$ additional errors.
Intuitively, we will show that for each vertex $v^\star$ labeled with $\tlabel{s}$ in $T_{n,k}$ it holds that if at most $\ell$ manipulations already happened in the subtree below $v^\star$, then the strongest player wins the game at $v^\star$ if the strongest player wins the game at the vertex labeled with $p_{k-e(s)-\ell}(f(s))$ in $T_n$ without any manipulations. Informally speaking, the number $k-e(s)-\ell$ quantifies the number of manipulations that we can still support ``above'' vertex $v^\star$. The bit string $p_{k-e(s)-\ell}(f(s))$, intuitively, is the label of the vertex in $T_n$ to which the tournament tree $T_{n,k}$ ``simulates'' the advancement of the strongest player.
Formally, we prove that the following invariant holds during the conduction of $\mathcal{T}$.
\begin{lemma}\label{lem:ternarycorrect}
Let $N$ be a set of $n$ players, let $w$ be a winner function that exhibits a strongest player $i_w\in N$, and let $\sigma$ be a seeding for the players to the seed position names of $T_{n,k}$. 
    Let $s$ be a string on the alphabet $\{0,1,\bot\}$ such that $|f(s)|\le r$ and $e(s)\le k$, and let $0\le \ell\le k-e(s)$. If on each tournament run in $\mathcal{T}=(N,T_{n,k},w,\sigma)$ from a leaf of $T_{n,k}$ to the vertex $v^\star$ labeled with $\tlabel{s}$ we have that at most $\ell$ games are manipulated, then the following holds.
    %
    If $i_w$ wins the game at the vertex labeled with $p_{k-e(s)-\ell}(f(s))$ of $T_n$ when the tournament $\mathcal{T}'=(N,T_n,w,\sigma)$ is conducted without any manipulations, then $i_w$ wins the game at $v^\star$.
\end{lemma}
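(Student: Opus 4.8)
The plan is to prove \cref{lem:ternarycorrect} by induction on the height of the vertex $v^\star$ in $T_{n,k}$, processing vertices from the leaves upward. The base case concerns leaves of $T_{n,k}$, where $|f(s)|=r$ and $e(s)=k$, so that $\ell=0$ and $p_{k-e(s)-\ell}(f(s))=p_0(f(s))=f(s)$ is a full-length seed position name. Here a leaf of $T_{n,k}$ corresponds directly to the leaf of $T_n$ with the same seed position name $f(s)$, and since $\sigma$ is the same seeding in both $\mathcal{T}$ and $\mathcal{T}'$, the same player sits at both leaves; with $\ell=0$ manipulations there is nothing to check, so the invariant holds trivially.

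For the inductive step, I would consider a vertex $v^\star$ labeled $\tlabel{s}$ and split into the three ways its children arise, matching the three bullets of the construction. The key idea is to read off, from the child through which the strongest player $i_w$ advances, the corresponding movement in $T_n$, and to use the definitions of $f$ and $e$ together with the prefix operator $p_\ell$ to show the invariant is preserved. Concretely: if $i_w$ advances from the child labeled $s0$ (or $s1$), then $f(s0)=f(s)0$, $e(s0)=e(s)$, and the game at $v^\star$ in $T_n$ corresponds to removing the last bit, i.e.\ to $p_{k-e(s)-\ell}(f(s))$ being the parent in $T_n$ of $p_{k-e(s)-\ell}(f(s)0)$; the manipulation budget $\ell$ is unchanged, so the inductive hypothesis at the child yields exactly what is needed. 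The delicate case is the $\bot$-child: here $f(s\bot)$ removes a trailing $0$ or $1$ together with the appended $\bot$ (or a leading $\bot$), so $e(s\bot)=e(s)+1$, and the prefix index $k-e(s\bot)-\ell = k-e(s)-\ell-1$ shifts by one, which is precisely how the simulation records that one manipulation has been ``spent'' at this level.

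The main obstacle will be the bookkeeping in the case where the game at $v^\star$ is \emph{itself} manipulated, since the hypothesis only bounds manipulations in the runs \emph{up to} $v^\star$, and a manipulation at $v^\star$ must be accounted for when passing to the parent. I would handle this by carefully distinguishing, at each child, whether $i_w$ genuinely wins the game there versus whether a manipulation diverts the outcome, and I would argue that an adversarial manipulation at $v^\star$ is exactly mirrored by an Eve-flip in the communication protocol of \cref{sec:protocol1}: Alice's response (sending a $\bot$) corresponds to the existence of the $\bot$-child through which $i_w$ can still advance, provided the budget $e(s)<k$ leaves room. Thus the correctness reduces to the folklore correctness of the feedback protocol, with the counter $e(s)$ playing the role of Alice's manipulation counter and $\ell$ tracking how much of the per-run budget $k$ has been consumed below $v^\star$.

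The cleanest way to organize this is to verify that in every scenario the strongest player has \emph{some} child from which the inductive hypothesis applies with the shifted parameters, and then to confirm that $i_w$ winning at the parent in $T_n$ (the vertex $p_{k-e(s)-\ell}(f(s))$) implies $i_w$ wins the games at all of $v^\star$'s children in $T_{n,k}$ that $i_w$ reaches. Because $i_w$ is the strongest player, $i_w=w(\cdot,\cdot,\cdot)$ whenever $i_w$ participates, so the only way $i_w$ fails to advance out of a game is through a manipulation; counting these against $\ell$ and invoking the bound $\ell\le k-e(s)$ closes the induction. I expect the formal argument to require one auxiliary claim relating $p_{k-e(s)-\ell}(f(s))$ across a single construction step, essentially a one-line identity about prefixes, which then makes all three cases uniform.
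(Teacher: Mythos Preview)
Your overall scaffolding (induction from leaves to root, base case at $e(s)=k$) matches the paper, but the inductive step has a genuine gap in how you handle a manipulation \emph{at} $v^\star$.

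You write that a manipulation at $v^\star$ ``must be accounted for when passing to the parent'' and that the $\bot$-child is the path ``through which $i_w$ can still advance'' after such a manipulation. Both of these are wrong. First, the game at $v^\star$ lies on every run from a leaf to $v^\star$, so a manipulation there is already charged to the budget $\ell$; nothing is deferred to the parent. Second, the $\bot$-child sits \emph{below} $v^\star$, not above it, so it cannot rescue $i_w$ after the game at $v^\star$ is rigged. The paper's actual argument is the opposite of a rescue path: if the game at $v^\star$ is manipulated, then every run to each child of $v^\star$ carries at most $\ell-1$ manipulations, so you invoke the induction hypothesis at \emph{every} child with parameter $\ell-1$. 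The arithmetic $p_{k-e(s0)-(\ell-1)}(f(s0))=p_{k-e(s)-\ell}(f(s))$ (and the analogous identities for $s1$ and $s\bot$, the latter giving $p_{k-e(s)-\ell+1}(f(s))$, a prefix of $p_{k-e(s)-\ell}(f(s))$ in $T_n$) then shows that $i_w$ wins at \emph{all} children simultaneously. Hence every candidate at $v^\star$ is $i_w$, and the manipulation is vacuous.

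Conversely, when the game at $v^\star$ is \emph{not} manipulated, you apply induction at the children with parameter $\ell$; here it suffices that $i_w$ wins at \emph{one} child (and the delicate sub-case is when $\ell=k-e(s)$, where the induction hypothesis does not even apply at the $\bot$-child because $k-e(s\bot)-\ell=-1$). Your sketch conflates these two regimes by speaking of ``some child from which the inductive hypothesis applies'' without distinguishing the all-children argument (manipulated case) from the one-child argument (unmanipulated case). That distinction, together with the correct direction of the $\ell\mapsto\ell-1$ bookkeeping, is the core of the proof and is missing from your proposal.
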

\begin{proof}
    We prove the lemma by induction on the depth of $T_{n,k}$ (from the leaves to the root). Let $v$ be a leaf in $T_{n,k}$. By definition, $v$ is labeled with some $\tlabel{s}$ such that $e(s)=k$. Hence, we have $\ell=0$. By definition, we also have that the seed position name of leaf $v$ is $f(s)$. Since $p_0(f(s))=f(s)$, the statement holds.


    Consider a vertex $v^\star$ labeled with $\tlabel{s}$ such that $|f(s)|\le r$ and $e(s)\le k$ in $T_{n,k}$ that is not a leaf. Assume that on each tournament run from a leaf to this vertex, we have that at most $\ell\le k-e(s)$ games are manipulated. Consider the following cases.
    \begin{itemize}
        \item We have that $|f(s)|= r$. Then $e(s)<k$, otherwise $v^\star$ is a leaf in $T_{n,k}$. 
        
        By definition, $v^\star$ has one vertex as a child labeled with $\tlabel{s\bot}$. Note that $e(s)=e(s\bot)-1$. If the game at $v^\star$ is not manipulated, then by induction, the game at the child of~$v^\star$ is won by $i_w$ if the game at the vertex labeled with $p_{k-e(s\bot)-\ell}(f(s\bot))=p_{k-e(s)-\ell}(f(s))$ in $T_n$ is won by $i_w$ when $\mathcal{T}'$ is conducted without manipulations. 
        If the game at $v^\star$ is manipulated, then by induction, the game at the child of $v^\star$ is won by $i_w$ if the game at the vertex labeled with $p_{k-e(s\bot)-\ell+1}(f(s\bot))=p_{k-e(s)-\ell+1}(f(s))$ in $T_n$ is won by $i_w$ when $\mathcal{T}'$ is conducted without manipulations. 
        Note that if the game at the vertex labeled with $p_{k-e(s)-\ell}(f(s))$ in $T_n$ is won by $i_w$ when $\mathcal{T}'$ is conducted without manipulations, then $i_w$ also wins the game at the vertex labeled with $p_{k-e(s)-\ell+1}(f(s))$ in $T_n$, since the latter vertex is the parent of the former, and $i_w$ is the strongest player. 
        Hence, in both cases, we can conclude that the game at $v^\star$ is won by $i_w$ if the game at the vertex labeled with $p_{k-e(s)-\ell}(f(s))$ in $T_n$ is won by $i_w$ when $\mathcal{T}'$ is conducted without manipulations, and the lemma statement holds.
        \item We have that $e(s)=k$. Then $|f(s)|<r$, otherwise $v^\star$ is a leaf in $T_{n,k}$. Furthermore, we must have that $\ell=0$.

        In this case, $v^\star$ has two children labeled with $\tlabel{s0}$ and $\tlabel{s1}$, respectively. By induction, the games at the vertices labeled with $\tlabel{s0}$ and $\tlabel{s1}$, respectively, are won by $i_w$ if the games at the vertices labeled with $p_{0}(f(s0))$ and $p_{0}(f(s1))$ in $T_n$, respectively, are won by $i_w$ when $\mathcal{T}'$ is conducted without manipulations. 
        Since the game at $v^\star$ is not manipulated, we have that the game at vertex labeled with $p_{0}(f(s))$ in $T_n$ is won by $i_w$ when $\mathcal{T}'$ is conducted without manipulations, then $i_w$ wins the game at $v^\star$. Hence, the lemma statement holds.
        
        \item We have that $|f(s)|< r$, $e(s)<k$, and $\ell=k-e(s)$.

        In this case, $v^\star$ has three children labeled with $\tlabel{s0}$, $\tlabel{s1}$, and $\tlabel{s\bot}$, respectively.

        Consider the case where the game at $v^\star$ is manipulated. Then $\ell-1$ games on each tournament run from a leaf to each of the vertices labeled with $\tlabel{s0}$, $\tlabel{s1}$, and $\tlabel{s\bot}$, respectively, are manipulated.
        By induction, the games at the vertices labeled with $\tlabel{s0}$ and $\tlabel{s1}$, respectively, are won by $i_w$ if the games at the vertices labeled with $p_{1}(f(s0))=p_{0}(f(s))$ and $p_{1}(f(s1))=p_{0}(f(s))
        $ in $T_n$, respectively, are won by $i_w$ when $\mathcal{T}'$ is conducted without manipulations. 
        The game at the vertex labeled with $\tlabel{s\bot}$  is won by $i_w$ if the game at the vertex labeled with $p_{0}(f(s\bot))=p_{1}(f(s))$ in $T_n$ is won by $i_w$ when $\mathcal{T}'$ is conducted without manipulations. 
        Note that if the game at the vertex labeled with $p_{0}(f(s))$ in $T_n$ is won by $i_w$ when $\mathcal{T}'$ is conducted without manipulations, then $i_w$ also wins the game at the vertex labeled with $p_{1}(f(s))$ in $T_n$, since the latter vertex is the parent of the former, and $i_w$ is the strongest player.
        Hence, we can conclude that the game at $v^\star$ is won by $i_w$ if the game at the vertex labeled with $p_{0}(f(s))$ in $T_n$ is won by $i_w$ when $\mathcal{T}'$ is conducted without manipulations, and the lemma statement holds.
        
        Consider the case where the game at $v^\star$ is not manipulated. 
        Then~$\ell$ games on each tournament run from a leaf to the vertex labeled with $\tlabel{s0}$, $\tlabel{s1}$, and $\tlabel{s\bot}$, respectively, are manipulated.
        By induction, the games at the vertices labeled with $\tlabel{s0}$ and $\tlabel{s1}$, respectively, are won by $i_w$ if the games at the vertices labeled with $p_{0}(f(s0))$ and $p_{0}(f(s1))$ in $T_n$, respectively, are won by $i_w$ when $\mathcal{T}'$ is conducted without manipulations.
For the game at the vertex labeled with $\tlabel{s\bot}$ we cannot apply the induction hypothesis, since $e(s)=e(s\bot)-1$ and hence $k-e(s\bot)-\ell=-1$. We only know that it is won by a player that is seeded into the subtree of $T_{n,k}$ that is rooted at the vertex labeled with $\tlabel{s\bot}$.
Note that if the game at the vertex labeled with $p_{0}(f(s0))$ or $p_{0}(f(s1))$ in $T_n$ is won by $i_w$ when $\mathcal{T}'$ is conducted without manipulations, then $i_w$ also wins the game at the vertex labeled with $p_{0}(f(s))$ in $T_n$, since the latter vertex is the parent of the former, and $i_w$ is the strongest player.
Since the game at $v^\star$ is not manipulated and $i_w$ is the strongest player, we have that the game at vertex labeled with $p_{0}(f(s))$ in $T_n$ is won by $i_w$ when $\mathcal{T}'$ is conducted without manipulations, then $i_w$ wins the game at $v^\star$. Hence, the lemma statement holds.

        \item We have that $|f(s)|< r$, $e(s)<k$, and $\ell<k-e(s)$. 

In this case, $v^\star$ has three children labeled with $\tlabel{s0}$, $\tlabel{s1}$, and $\tlabel{s\bot}$, respectively.

        Consider the case where the game at $v^\star$ is manipulated. Then $\ell-1$ games on each tournament run from a leaf to the vertex labeled with $\tlabel{s0}$, $\tlabel{s1}$, and $\tlabel{s\bot}$, respectively, are manipulated.
        By induction, the games at the vertices labeled with $\tlabel{s0}$ and $\tlabel{s1}$, respectively, are won by $i_w$ if the games at the vertices labeled with $p_{k-e(s0)-\ell+1}(f(s0))=p_{k-e(s)-\ell}(f(s))$ and $p_{k-e(s1)-\ell+1}(f(s1))=p_{k-e(s)-\ell}(f(s))
        $ in $T_n$, respectively, are won by $i_w$ when $\mathcal{T}'$ is conducted without manipulations. 
        The game at the vertex labeled with $\tlabel{s\bot}$  is won by $i_w$ if the game at the vertex labeled with $p_{k-e(s\bot)-\ell+1}(f(s\bot))=p_{k-e(s)-\ell+1}(f(s))$ in $T_n$ is won by $i_w$ when $\mathcal{T}'$ is conducted without manipulations.  
        Note that if the game at the vertex labeled with $p_{k-e(s)-\ell}(f(s))$ in $T_n$ is won by $i_w$ when $\mathcal{T}'$ is conducted without manipulations, then $i_w$ also wins the game at the vertex labeled with $p_{k-e(s)-\ell+1}(f(s))$ in $T_n$, since the latter vertex is the parent of the former, and $i_w$ is the strongest player. 
        Hence, we can conclude that the game at $v^\star$ is won by $i_w$ if the game at the vertex labeled with $p_{k-e(s)-\ell}(f(s))$ in $T_n$ is won by $i_w$ when $\mathcal{T}'$ is conducted without manipulations, and the lemma statement holds.

        Consider the case where the game at $v^\star$ is not manipulated. 
        Then~$\ell$ games on each tournament run from a leaf to the vertex labeled with $\tlabel{s0}$, $\tlabel{s1}$, and $\tlabel{s\bot}$, respectively, are manipulated.
        By induction, the games at the vertex labeled with $\tlabel{s0}$ and $\tlabel{s1}$, respectively, are won by $i_w$ if the games at the vertices labeled with $p_{k-e(s0)-\ell}(f(s0))$ and $p_{k-e(s1)-\ell}(f(s1))$ in $T_n$, respectively, are won by $i_w$ when $\mathcal{T}'$ is conducted without manipulations.
        The game at the vertex labeled with $\tlabel{s\bot}$ is won by $i_w$ if the game at the vertex labeled with $p_{k-e(s\bot)-\ell}(f(s\bot))=p_{k-e(s)-\ell}(f(s))$ in $T_n$ is won by $i_w$ when $\mathcal{T}'$ is conducted without manipulations.
        Since the game at $v^\star$ is not manipulated and $i_w$ is the strongest player, we have that the game at vertex labeled with $p_{k-e(s)-\ell}(f(s))$ in $T_n$ is won by $i_w$ when $\mathcal{T}'$ is conducted without manipulations, then $i_w$ wins the game at $v^\star$. Hence, the lemma statement holds.
    \end{itemize}
    This finishes the proof.
\end{proof}

\cref{thm:ternary} now immediately follows from \cref{lem:ternarycorrect}, \cref{obs:timeternary}, and \cref{lem:ternaryheight}.

\subsection{The Binary Case}

In the following, we describe recursively how to obtain a (binary) tournament tree $T_{n,k}$ for a tournament with $n$ players. Here, $k$ denotes the number of manipulations that may occur in each tournament run. We will prove the following.

\begin{theorem}\label{thm:binary}
Given integers $n$ and $k$, a tournament tree $T_{n,k}$ of height at most $\lceil 1.44\log n \rceil + 3k$ can be computed in $O(|T_{n,k}|)$ time such that the following holds.
Let $N$ be a set of $n$ players, let $w$ be a winner function that exhibits a strongest player $i_w\in N$, and let $\sigma$ be a seeding for the players to the seed position names of $T_{n,k}$.
    If on each tournament run in $\mathcal{T}=(N,T_{n,k},w,\sigma)$ we have that at most $k$ games are manipulated, then $i_w$ wins the tournament.
\end{theorem}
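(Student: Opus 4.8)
The plan is to transport the entire development of the ternary case (\cref{thm:ternary,lem:ternaryheight,lem:ternarycorrect,obs:timeternary}) to the binary setting, replacing the error symbol $\bot$ by the two-bit gadget $11$ of the protocol in \cref{sec:protocol2}. I would again attach to every vertex a triple $\tlabel{s}$ with $s\in\{0,1\}^\ast$, where $f$ is now the reconstruction map of \cref{sec:protocol2} (scan $s$ left to right and, whenever the reduced string would end in $11$, delete its last three symbols) and $e(s)$ counts the number of such deletions. The root is $\tttlabel{\varepsilon}{\varepsilon}{0}$; a non-leaf vertex $\tlabel{s}$ has children $\tlabel{s0}$ and $\tlabel{s1}$ under budget conditions analogous to the ternary ones, where appending a $1$ acts as a message bit or as (half of) the $11$-gadget according to the current suffix of $f(s)$; and $\tlabel{s}$ becomes a leaf exactly when $|f(s)|=m$ and $e(s)=k$, whereupon $f(s)$ is its seed-position name. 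The one essentially new ingredient is the codeword set: since the $11$-gadget forces every reconstructed message to avoid two consecutive ones, the seed positions must be length-$m$ binary strings with no $11$. There are $F_{m+2}$ such strings (a Fibonacci number), and since $F_{m+2}\ge\phi^m$ with $\phi=(1+\sqrt5)/2$, taking $m=\lceil(\log n)/\log\phi\rceil=\lceil1.44\log n\rceil$ gives at least $n$ codewords; this is precisely the origin of the $1.44$ factor.

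The two auxiliary bounds should go through essentially verbatim. The linear-time construction is argued as in \cref{obs:timeternary}, since $f$ and $e$ at a child are computed from the parent's values in $O(1)$ time. For the height I would use the potential $\Phi(s)=(m-|f(s)|)+3(k-e(s))$ and verify that every edge decreases it by exactly one: appending a message bit raises $|f(s)|$ by one, whereas completing a $11$-gadget nets a decrease of $|f(s)|$ by two (one symbol appended, three deleted) together with $e(s)$ increasing by one, so in both cases $\Delta\Phi=-1$. As $\Phi$ is nonnegative, equals $m+3k$ at the root and $0$ at every leaf, the height is at most $m+3k=\lceil1.44\log n\rceil+3k$, the analog of \cref{lem:ternaryheight}; the jump from the ternary $2k$ to $3k$ reflects that a correction now costs three positions rather than two.

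The heart of the argument is the analog of the invariant in \cref{lem:ternarycorrect}. As there, I would introduce a reference tree $T_n$ on the codewords together with the single property that the ternary proof actually uses: $i_w$ wins the game at a node of $T_n$ if and only if $i_w$ is seeded in that node's subtree, which yields both monotonicity (a win propagates to the parent) and a win at the root. I would then prove by induction from the leaves of $T_{n,k}$ upward that, for every vertex $\tlabel{s}$ and every admissible $\ell$, if at most $\ell$ games are manipulated on each run below the vertex then $i_w$ wins there provided $i_w$ wins at the ancestor $p_{\beta}(f(s))$ of $T_n$, where $\beta$ is the correction budget still available above the vertex. Instantiating this at the root ($s=\varepsilon$, $\ell\le k$) and using that $i_w$ wins the unmanipulated $T_n$ then yields the theorem, exactly as \cref{thm:ternary} followed from its three ingredients.

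The main obstacle is this induction step, which is genuinely more delicate than in the ternary case. A $\bot$ corrects a single manipulation and shifts the reference prefix by one bit; here a manipulation is repaired through the two-symbol gadget $11$, whose completion deletes three bits, and---following the four sub-cases of step~3 of \cref{sec:protocol2}---a single flipped bit may force the message to retreat by one or two positions and may raise the counter by one or two. I therefore expect the difficulty to be concentrated in (i) choosing the bookkeeping so that the prefix index $\beta$ shifts by the correct amount in each sub-case, balancing the changes in $e(s)$ and $|f(s)|$ against the manipulations absorbed; (ii) the boundary child to which the induction hypothesis does not apply, the analog of the $k-e(s\bot)-\ell=-1$ situation in \cref{lem:ternarycorrect}, where one only knows that the winner comes from the relevant subtree; and (iii) the start-of-message boundary, where completing $11$ with fewer than three reconstructed bits present must be handled, either by a convention that clears the message or by forbidding codewords beginning with $1$. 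Once these are settled the remaining checks are routine, and \cref{thm:binary} follows by combining the height bound, the linear-time construction, and the invariant.
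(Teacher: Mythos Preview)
Your plan matches the paper's approach almost exactly: the same $f,e$ labelling, the same potential $(m-|f(s)|)+3(k-e(s))$ for the height bound, the same Fibonacci count giving $m=\lceil 1.44\log n\rceil$, and the same inductive invariant from the leaves up. Two points where the paper's execution differs from what you anticipate are worth flagging. First, the paper abandons the reference tournament $T_n$ in the binary case and states the invariant directly as ``$\sigma(i_w)$ has $p_{k-e(s)-\ell}(f(s))$ as a prefix''; it explicitly remarks that the natural tournament tree on the codewords is \emph{not} the prefix tree once codewords avoid consecutive ones, so your equivalence ``$i_w$ wins at a node of $T_n$ iff $i_w$ is seeded in that subtree'' does not transfer without specifying that $T_n$ is the prefix tree rather than a knockout tree. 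Second, and more substantively, the paper \emph{redefines} $p_\ell$: rather than deleting $\ell$ trailing bits, it sets $p_\ell(s)=f(s1^{\ell'})$ for the least $\ell'$ with $e(s1^{\ell'})=e(s)+\ell$ (or until the string empties), and proves a separate observation relating $p_\ell(f(s))$ to $p_{\ell\pm1}(f(s0))$ and to $p_{\ell}(f(s1))$ or $p_{\ell-1}(f(s1))$ according to the last bit of $f(s)$. This redefinition is exactly the bookkeeping device that resolves your difficulty~(i); with the ternary $p_\ell$ the index shifts through the two halves of the $11$-gadget do not line up. (One minor correction: the potential can drop by $2$ on an edge, e.g.\ when $f(s)=1$ and a $1$ is appended; the paper only claims a decrease of at least one, which is all that is needed.)
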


Analogous to the ternary case, we label each vertex of $T_{n,k}$ with three parts:
\[
\tlabel{s},
\]
where $s$ is a binary string, $k$ is a non-negative integer, $f:\{0,1\}^*\rightarrow\{0,1\}^*$ is a function mapping binary strings to binary strings, and $e:\{0,1\}^*\rightarrow\mathbb{N}$ is a function mapping binary strings to non-negative integers. The functions $f$ and $e$ are defined as follows.
\begin{itemize}
    \item The function $f$ takes as input a string $s$, exhaustively remove the prefix 11 from $s$, and then exhaustively removes all substrings 011 from $s$.
    \item The function $e$ takes as input a string and counts how many times the function $f$ removes a prefix 11 or a substring of the form 011 from $s$.
\end{itemize}

Intuitively, the first part of the label corresponds to the transmission sequence as it is sent by Alice to Bob. The second part corresponds to the sequence that is reconstructed by Bob and believed by him to be the unmanipulated string. The third part counts how many manipulations were already identified by Bob.

We label the root of the tournament tree $T_{n,k}$ with label $\tlabel{\varepsilon}$, where $\varepsilon$ is the empty string. The label simplifies to $\tttlabel{\varepsilon}{\varepsilon}{0}$. 
Let $r$ denote the length of the seed position names. Note that since we will only use bit strings that do not contain consecutive ones as seed position names, using names of length $r$ does not give $2^r$ different names. We will explain later to which value we need to set $r$ to guarantee that we get at least $n$ different seed position names.
A vertex with label $\tlabel{s}$ such that $|f(s)|\le r$ and $e(s) \le k$ has up to two children.
\begin{itemize}
    \item If $|f(s)|< r$, then the vertex has one child with label $\tlabel{s0}$.
    \item If $|f(s)|< r$ or $e(s)< k$, then the vertex has one child with label $\tlabel{s1}$.
\end{itemize}
Here $s0$ and $s1$ are the strings $s$ with a $0$ and a $1$ appended, respectively. We give an illustration in \cref{fig:binary1}. Note that a vertex with label $\tlabel{s}$ such that $|f(s)| = r$ and $e(s)=k$ is a leaf. Then we consider the bit string $f(s)$ to be the seed position name associated with the leaf. If any inner vertex in the constructed tree has only one child, then we can copy this child (together with the subtree rooted at it) to obtain a full binary tree.
 See \cref{fig:binary2} for an illustration of the root of the tournament tree together with the three generations of children. The gray vertex in \cref{fig:binary2} is the first occurrence of a vertex label where the function $f$ removes an occurrence of the string 011 from $s$.
\begin{figure}[t]
\begin{center}

\includegraphics[scale=1]{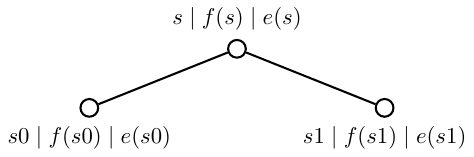}
%
%
    \end{center}
    \caption{Visualization of the recursive construction of $T_{n,k}$ in the binary case.}\label{fig:binary1}
\end{figure}

\begin{figure}[t]
\noindent\makebox[\textwidth]{
\centering

\includegraphics[scale=1]{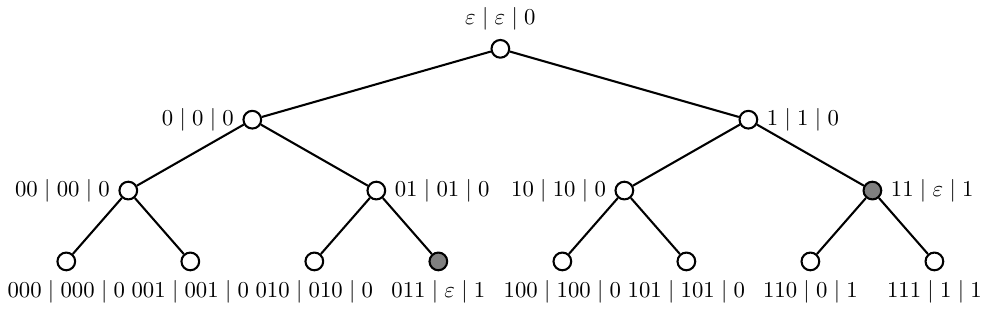}
%
%
%
%
%
%
    }
    \caption{Illustration of the tournament tree $T_{n,k}$ in the binary case. Gray vertices visualize vertices with labels where the function $f$ removes occurrences of 001 from $s$.}\label{fig:binary2}
\end{figure}




Note that given the label $\tlabel{s}$ of a vertex, we can compute the labels of the children in constant time. To do this, we do not recompute the functions $f$ and $e$ but rather compute the values of $f$ and $e$ in the labels of the children from the values in the label $\tlabel{s}$ in a straightforward manner. This allows us to observe the following.

\begin{observation}\label{obs:timebinary}
    The tournament tree $T_{n,k}$ can be computed in $O(|T_{n,k}|)$ time.
\end{observation}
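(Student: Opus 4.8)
The plan is to build $T_{n,k}$ by a single depth-first traversal from the root, carrying along an incrementally maintained representation of the current vertex label $\tlabel{s}$ rather than recomputing $f$ and $e$ from scratch at each vertex. The essential observation is that we never need the raw string $s$ itself: to decide which children a vertex has (via the tests $|f(s)|<r$ and $e(s)<k$) and to read off the seed position name $f(s)$ at a leaf, it suffices to know $f(s)$, the integer $e(s)$, and the length $|f(s)|$. I would therefore store $f(s)$ on a stack (an array with a top pointer), together with the counters $e(s)$ and $|f(s)|$, and update all three in $O(1)$ time when moving from a vertex to a child.

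The correctness of these $O(1)$ updates rests on one structural fact: every string in the image of $f$ contains no two consecutive ones (equivalently, $f$ has fully reduced away every prefix $11$ and every substring $011$). Granting this invariant, appending a single symbol to $s$ triggers at most one reduction step of $f$, and which case occurs is decided by inspecting only the top one or two symbols of the stack holding $f(s)$. Concretely, moving to the child $\tlabel{s0}$ pushes a $0$ and leaves $e$ unchanged; moving to $\tlabel{s1}$ pushes a $1$ and leaves $e$ unchanged when $f(s)$ is empty or ends in $0$, and otherwise (when $f(s)$ ends in $1$) pops the one or two symbols forming the newly completed $11$ or $011$ and increments $e$ by one. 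Each such step, and its reversal on backtracking, is a constant number of stack operations; in particular we never copy the string $f(s)$, which is what keeps the per-edge cost constant rather than proportional to $|f(s)|$.

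Having established $O(1)$ work per descent and per ascent, the traversal visits each edge of $T_{n,k}$ a constant number of times and performs $O(1)$ work per visit, so the whole tree is produced in $O(|T_{n,k}|)$ time; this also subsumes the final step of duplicating a single child into two identical subtrees to turn the tree into a full binary tree, since that duplication is charged against the output size $|T_{n,k}|$. The argument is entirely analogous to \cref{obs:timeternary} for the ternary construction. The main obstacle is the careful verification of the incremental update of $f$ and $e$---in particular confirming the no-consecutive-ones invariant of the image of $f$ and that appending one bit provokes at most one reduction---together with the bookkeeping that restores $f(s)$, $e(s)$, and $|f(s)|$ exactly on backtracking, so that the single shared stack correctly represents the label of whichever vertex the traversal currently occupies.
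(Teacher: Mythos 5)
Your proposal is correct and matches the paper's argument, which likewise observes that the child labels can be computed in constant time from the parent's label by updating $f$ and $e$ incrementally rather than recomputing them; you simply spell out the stack-based bookkeeping and the no-consecutive-ones invariant that the paper leaves implicit as ``straightforward.''
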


Now we analyze the height of $T_{n,k}$.

\begin{lemma}\label{lem:binaryheight}
    The height of $T_{n,k}$ is at most $r + 3k$.
\end{lemma}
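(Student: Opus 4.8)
The plan is to mimic the potential-function argument used for the ternary case in \cref{lem:ternaryheight}, but with a weight that reflects the larger cost of signalling an error in the binary protocol. Since every edge of $T_{n,k}$ appends exactly one symbol to the string $s$ of a vertex label, the depth of a vertex equals $|s|$, so it suffices to bound $|s|$ for every vertex. I would track the quantity $\Phi(s)=(r-|f(s)|)+3(k-e(s))$, show that it equals $r+3k$ at the root (where $f(\varepsilon)=\varepsilon$ and $e(\varepsilon)=0$), that it drops by at least $1$ along every parent-to-child edge, and that it stays non-negative on the genuine vertices of the tree (those with $|f(s)|\le r$ and $e(s)\le k$). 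These three facts immediately give depth $\le r+3k$.

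The heart of the argument is to determine exactly how $|f(s)|$ and $e(s)$ change when a single symbol is appended, and the key structural fact I would establish first is that $f(s)$ never contains two consecutive ones (any ``$11$'' would either be a prefix, hence removed, or be preceded by a $0$, forming a removable ``$011$''). Granting this, appending a $0$ can create neither a trailing ``$011$'' nor a ``$11$'' prefix, so it leaves $e$ unchanged and increases $|f|$ by one. Appending a $1$ falls into exactly one of three cases, according to how the reduced word $f(s)$ ends: if $f(s)$ ends in $0$ (or is empty) no reduction is triggered, so $|f|$ increases by one and $e$ is unchanged; if $f(s)$ ends in ``$01$'' the new trailing ``$011$'' is deleted, decreasing $|f|$ by two and increasing $e$ by one; and if $f(s)$ equals the single symbol ``$1$'' the prefix ``$11$'' is deleted, decreasing $|f|$ by one and increasing $e$ by one. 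The no-consecutive-ones property is precisely what guarantees that at most one reduction is triggered per appended symbol, so these cases are exhaustive.

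Plugging the four transitions into $\Phi$ finishes the calculation: the two non-reducing cases change $\Phi$ by $-1$; the ``$011$'' case changes it by $-(-2)-3(+1)=-1$; and the prefix-``$11$'' case changes it by $-(-1)-3(+1)=-2$. Thus $\Phi$ decreases by at least one along each edge, which is exactly where the coefficient $3$ (rather than the $2$ of the ternary case) is needed: signalling one error costs three appended symbols (the re-sent bit together with the two-symbol marker ``$11$''), matching the ``delete the last three bits'' rule of Bob's reconstruction. Equivalently, writing $a$ and $b$ for the numbers of ``$011$''- and prefix-``$11$''-removals performed while reducing $s$, one has $e(s)=a+b$ and $|s|=|f(s)|+3a+2b=|f(s)|+2e(s)+a\le |f(s)|+3e(s)$, and the bound follows.

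The step I expect to need the most care is the boundary behaviour rather than the core calculation. The construction creates a child $s1$ whenever $|f(s)|<r$ \emph{or} $e(s)<k$, so it can produce ``spurious'' leaves in which $|f|$ momentarily reaches $r+1$ (a first error-marker symbol appended after a completed message) or $e$ reaches $k+1$; on such vertices $\Phi$ can be negative, so the naive ``$\Phi\ge 0$'' claim fails. I would handle these by observing that any such vertex is necessarily a leaf (the construction grants children only when \emph{both} $|f(s)|\le r$ and $e(s)\le k$) whose parent is a genuine vertex, and then bounding its depth as one more than the parent's. The parent, satisfying $|f|\le r$ and $e\le k$ with at least one inequality strict, carries enough slack ($\Phi\ge 1$, equivalently $|s_{\mathrm{parent}}|\le r+3k-1$) that the child still obeys depth $\le r+3k$. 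This makes the bound robust to how the ambiguous ``$|f(s)|<r$ or $e(s)<k$'' clause is read.
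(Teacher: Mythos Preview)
Your proposal is correct and follows essentially the same potential-function argument as the paper: both use $\Phi(s)=r-|f(s)|+3(k-e(s))$, verify $\Phi(\varepsilon)=r+3k$, and show $\Phi$ drops by at least one along every edge via the same case analysis on how appending $0$ or $1$ affects $|f|$ and $e$. Your treatment is in fact more thorough than the paper's own proof: the paper simply asserts ``clearly, the value cannot become negative,'' whereas you correctly notice that the clause ``$|f(s)|<r$ \emph{or} $e(s)<k$'' can spawn a leaf with $|f|=r+1$ or $e=k+1$ on which $\Phi$ may dip below zero, and you handle this by bounding such a leaf's depth through its parent's slack---a detail the paper elides.
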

\begin{proof}
    Let $v^\star$ be a vertex in $T_{n,k}$ that is labeled with $\tlabel{s}$. We associate with $\tlabel{s}$ the value $r-|f(s)|+3(k-e(s))$. By construction, the vertex $v^\star$ in $T_{n,k}$ has up to two children, labeled $\tlabel{s0}$ and $\tlabel{s1}$, respectively. We have that $\tlabel{s0}$ is associated with the value $r-|f(s0)|+3(k-e(s0))$. Note that we have $e(s)=e(s0)$ and $|f(s)|=|f(s0)|-1$ and hence 
    \[
    r-|f(s0)|+3(k-e(s0))=r-|f(s)|+3(k-e(s))-1.
    \]
    The situation with $\tlabel{s1}$ if $s$ ends with a zero and $e(s)=e(s1)$ is analogous. However, if $s$ ends with a one we have that $e(s)=e(s1)-1$. In this case, we have that $|f(s)|=|f(s1)|+1$ or $|f(s)|=|f(s1)|+2$. It follows that we have 
    \[
    r-|f(s1)|+3(k-e(s1))\le r-|f(s)|+3(k-e(s))-1.
    \]
 Overall, we have that each child of a vertex is associated with a value that is by at least one smaller than that of the vertex itself. Clearly, the value cannot become negative. The root of $T_{n,k}$ is associated with value $r + 3k$. The lemma statement follows.
\end{proof}

We want $r$ to be the smallest integer such that there are at least $n$ different binary strings of length $r$ that do not start with 11 or contain 011 as a substring. It is easy to see that those are the binary strings that do not contain consecutive ones. This will guarantee that we have sufficiently many different seed position names for $n$ players.

It is folklore that the number of binary strings of length $r$ that do not contain consecutive ones is $F_{r+2}$, where $F_r$ is the $r$th Fibonacci number.\footnote{To generate all strings of length $r$ without consecutive ones, we can do the following. Take all strings of length $r-1$ without consecutive ones and append a zero. This gives all strings of length $r$ without consecutive ones that end with zero. Take all strings without consecutive ones of length $r-2$ and append 01. This gives all strings of length $r$ without consecutive ones that end with one. The base case starts with 1 string of length zero and two strings of length 1.}
Furthermore, it is well-known that $F_{r+2}\ge \phi^r$, where $\phi=(1+\sqrt{5})/2$ is the golden ratio~\cite{gardner1959scientific}. Hence, we have that $F_{r+2}\ge n$ for all $r\ge \log n/\log \phi$, in particular for  $r = \lceil 1.44\log n \rceil$.
From \cref{lem:binaryheight} and the fact that we set $r = \lceil 1.44\log n \rceil$ we get the following.

\begin{corollary}\label{cor:binaryheight}
    The height of the robust tournament tree $T_{n,k}$ is at most $\lceil 1.44\log n \rceil + 3k$.
\end{corollary}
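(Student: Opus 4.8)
The plan is to obtain the corollary as an immediate consequence of \cref{lem:binaryheight} once the value of $r$ is pinned down. That lemma already bounds the height of $T_{n,k}$ by $r+3k$, so the whole task reduces to choosing $r$ as small as possible while still guaranteeing at least $n$ distinct seed position names, and then substituting this $r$ into the bound. No new structural argument about the tree is needed; everything is inherited.

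First I would justify the choice of $r$. By construction the seed position names are exactly the length-$r$ binary strings that neither begin with $11$ nor contain $011$ as a substring, and, as observed in the surrounding text, these coincide with the length-$r$ binary strings having no two consecutive ones. Hence it suffices to make their count at least $n$. Invoking the folklore fact that this count equals $F_{r+2}$, together with the standard estimate $F_{r+2}\ge\phi^r$ where $\phi=(1+\sqrt{5})/2$, it is enough to ensure $\phi^r\ge n$, i.e.\ $r\ge \log n/\log\phi=\log_\phi n$. Taking $r$ to be the smallest such integer gives $r=\lceil \log n/\log\phi\rceil$, and plugging this into \cref{lem:binaryheight} yields the height bound $\lceil \log n/\log\phi\rceil+3k$.

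The only genuinely delicate point—and thus the main (if minor) obstacle—is the final numerical simplification that replaces the exact constant $1/\log\phi=1/\log_2\phi$ by the round number $1.44$. Here I would verify $\log_2\phi=0.69424\ldots$, so that $1/\log_2\phi=1.4404\ldots$; the fully correct bound is therefore $\lceil (1/\log_2\phi)\log n\rceil+3k$, exactly as recorded in the informal \cref{theorem:main}, and $1.44$ is its decimal approximation. I would consequently read the corollary's $r=\lceil 1.44\log n\rceil$ as shorthand for this rounded golden-ratio constant, keeping the exact $\lceil \log n/\log\phi\rceil$ in the rigorous chain to avoid any off-by-one issue from the rounding; the additive $3k$ term carries over verbatim from \cref{lem:binaryheight}. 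Because every ingredient—the height bound, the Fibonacci count, and the golden-ratio estimate—is already in hand, the proof is essentially a one-line substitution guarded by this numerical bookkeeping.
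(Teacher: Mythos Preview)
Your proposal is correct and follows essentially the same approach as the paper: it too derives the corollary directly from \cref{lem:binaryheight} by substituting the choice $r=\lceil 1.44\log n\rceil$, justified via the Fibonacci count of length-$r$ binary strings without consecutive ones and the bound $F_{r+2}\ge\phi^r$. Your additional remark about the exact constant $1/\log\phi$ versus the rounded $1.44$ is a valid refinement, but the paper simply adopts the rounded value without further comment.
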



\begin{figure}[t]
\centering

\includegraphics[scale=1]{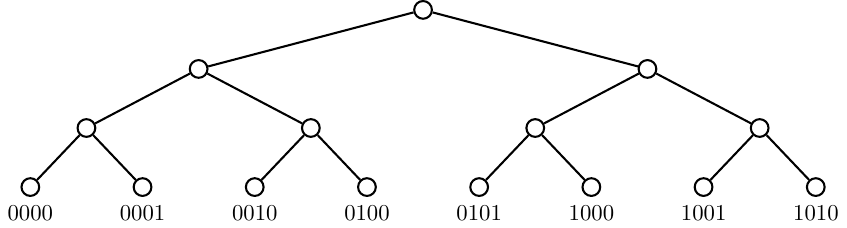}
%
%
%
%
%
%
    \caption{Illustration of a tournament tree that uses seed position names without consecutive ones.}\label{fig:labeling2}
\end{figure}

\begin{figure}[t]
\noindent\makebox[\textwidth]{
\centering

\includegraphics[scale=1]{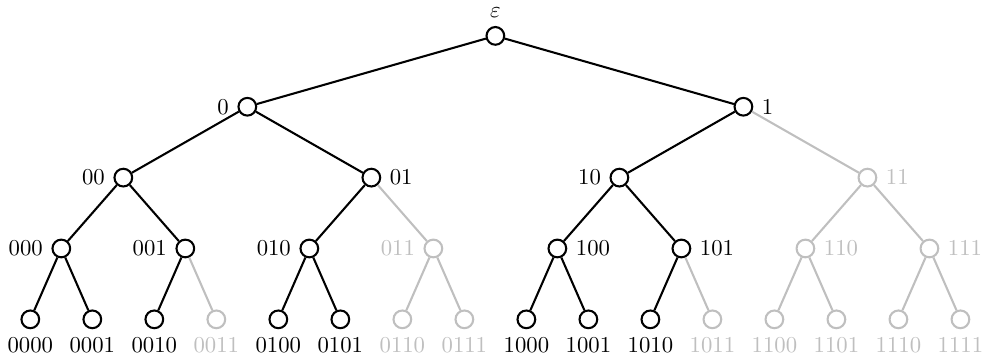}
    }
    \caption{Illustration of the prefix tree for the seed position names used in \cref{fig:labeling2}.}\label{fig:prefix}
\end{figure}

To show that $T_{n,k}$ is robust, we provide an analogous proof to the one for \cref{lem:ternarycorrect}. However, instead of considering whether the strongest player wins a certain game in a ``normal'' tournament with the same seed position names, we consider whether the seed position name of the strongest player has a certain prefix. In the previous section, those two ideas coincide, since the ``normal'' tournament tree $T_n$ (see \cref{fig:labeling}) is also a prefix tree for the seed position names. Here, this is not the case, since seed position names do not have consecutive ones (see \cref{fig:labeling2}). The prefix tree for the seed position names in \cref{fig:labeling2} is illustrated in \cref{fig:prefix}.
For the formal proof, we introduce some additional notation. 
\begin{definition}
    Let $s$ be a binary and let $\ell$ be some integer. We denote with $p_\ell(s)$ the string $f(s1^{\ell'})$ where $\ell'$ is the smallest integer such that $e(s1^{\ell'})=e(s)+\ell$ or $f(s1^{\ell'})=\varepsilon$. 
\end{definition}

In other words, $p_\ell(s)$ is the prefix obtained from $s$ by removing the number of bits that correspond to detecting $\ell$ additional errors.
We can observe the following.
\begin{observation}\label{obs:p}
  Let $s$ be a binary string and let $\ell$ be some integer.  
  
  If $f(s)$ ends with 1, then we have that $p_\ell(f(s))=p_{\ell-1}(f(s1))$ and $p_\ell(f(s))=p_{\ell+1}(f(s0))$.
  
  If $f(s)$ ends with 0, then we have that $p_\ell(f(s))=p_{\ell}(f(s1))$ and $p_\ell(f(s))=p_{\ell+1}(f(s0))$.
\end{observation}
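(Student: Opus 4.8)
The plan is to fix $t := f(s)$, which is a \emph{reduced} string (no two consecutive ones) with $e(t)=0$, and to reduce the whole statement to two ingredients: the incremental update rules for $f$ and $e$ under appending a single bit, and a clean description of the operator $p_\ell$ on reduced strings. For the first ingredient I would use exactly the rules that the construction already relies on when it computes a child's label from a parent's label in constant time: appending a $0$ gives $f(s0)=t0$ with $e(s0)=e(s)$; appending a $1$ gives $f(s1)=t1$ with $e(s1)=e(s)$ when $t$ ends in $0$ (or is empty), whereas when $t$ ends in $1$ — so that $t=t''01$ for some reduced $t''$, or $t=1$ — the trailing $011$ (resp.\ the leading $11$) is deleted, giving $f(s1)=t''$ and $e(s1)=e(s)+1$.

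The second ingredient is a description of $p_\ell$. Since appending ones to a reduced string triggers deletions one at a time and leaves a reduced string after each deletion, $p_\ell$ is the $\ell$-fold iterate of $p_1$ (with the empty string absorbing, $p_\ell(\varepsilon)=\varepsilon$, because the definition stops as soon as $f=\varepsilon$). Moreover $p_1$ has an explicit form on a reduced string $u$: if $u$ ends in $0$ then the first deletion removes exactly the trailing $0$, so $p_1(u)$ is $u$ with its last bit deleted; if $u$ ends in $1$ then the first appended one already completes a $011$ (or a leading $11$), so $p_1(u)$ is $u$ with its trailing $01$ deleted (the empty string if $u=1$). I would prove both statements directly from the definition of $p_\ell$ by tracking the deletion events, and record the two consequences that drive everything: $p_{\ell+1}(u0)=p_\ell(u)$ (appending a $0$ costs exactly one deletion to undo), and, when $u$ ends in $0$, $p_1(u1)=p_1(u)$ (both equal $u$ with its trailing $0$ removed).

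With these in hand the four identities become immediate substitutions. For the two $f(s0)$ identities, in either case $f(s0)=t0$ ends in $0$, so $p_{\ell+1}(f(s0))=p_{\ell+1}(t0)=p_\ell(t)=p_\ell(f(s))$ for every $\ell\ge 0$. For the $f(s1)$ identity when $f(s)$ ends in $1$, here $f(s1)=t''=p_1(t)$, so $p_{\ell-1}(f(s1))=p_{\ell-1}(p_1(t))=p_\ell(t)=p_\ell(f(s))$. For the $f(s1)$ identity when $f(s)$ ends in $0$, here $f(s1)=t1$ and $p_1(t1)=p_1(t)$, so $p_\ell(f(s1))=p_{\ell-1}(p_1(t1))=p_{\ell-1}(p_1(t))=p_\ell(t)=p_\ell(f(s))$.

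The step I expect to be the main obstacle is making the incremental picture of $f$, $e$, and $p_\ell$ fully rigorous, rather than the identity-checking, which is routine once that picture is in place. Concretely, one must justify that $f$ and $e$ are faithfully computed by the single-bit update rules above — equivalently, that deleting a $011$ (or a leading $11$) and continuing is confluent and always yields a string with no two consecutive ones — so that $f(s0)$ and $f(s1)$ may legitimately be expressed through $t=f(s)$; this is the same well-definedness that the linear-time construction tacitly assumes. Care is also needed at the boundaries: the empty string is absorbing under $p$, and the two $f(s1)$ identities implicitly require $\ell\ge 1$ (for $\ell=0$ the quantity $p_{\ell-1}$ is vacuous when $f(s)$ ends in $1$, and $p_0(t)=t\neq t1=p_0(f(s1))$ when $f(s)$ ends in $0$). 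This causes no difficulty in the intended application, where the regime $\ell=0$ is handled by a separate base case of the main lemma, exactly as the case $e(s)=k$ is treated separately in the ternary proof.
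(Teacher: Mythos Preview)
Your proposal is correct and somewhat more explicit than the paper's own argument. The paper works directly from the definition of $p_\ell$: it fixes the minimal index $\ell'$ witnessing $p_\ell(f(s))$ and shows that $\ell'-1$ is then the minimal index witnessing $p_{\ell-1}(f(s1))$ (in the case where $f(s)$ ends in $1$), after which it declares the remaining three cases analogous. You instead factor $p_\ell$ as the $\ell$-fold iterate of $p_1$, give $p_1$ an explicit closed form on reduced strings, and reduce all four identities to one-line substitutions via the rules $p_{\ell+1}(u0)=p_\ell(u)$ and $p_1(u1)=p_1(u)$ when $u$ ends in $0$. The two routes are close in spirit---both ultimately rest on the same incremental update rules for $f$ and $e$ that the paper already invokes (tacitly) for its linear-time construction---but your decomposition makes the bookkeeping more transparent and, notably, surfaces the $\ell=0$ boundary (where $p_0(f(s))=f(s)\neq f(s)1=p_0(f(s1))$ when $f(s)$ ends in $0$), which the paper's statement and proof leave implicit; your remark that this edge case is absorbed by the separate $e(s)=k$ branch in the induction of the main lemma is exactly right.
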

\begin{proof}
This follows straightforwardly from the definition of $p_\ell$. Consider the case where $f(s)$ ends with a 1. Then adding another 1 to $s$ will cause the string $s$ to have an additional two consecutive ones, and hence $e(s)=e(s1)-1$. Let $\ell'$ be the smallest integer such that $e(s1^{\ell'})=e(s)+\ell$ or $f(s1^{\ell'})=\varepsilon$. Observe that we also have $e((s1)1^{\ell'-1})=e(s1)+\ell-1$ or $f((s1)1^{\ell'-1})=\varepsilon$. Furthermore, we must have that $\ell'-1$ is the smallest integer such that $e((s1)1^{\ell'-1})=e(s1)+\ell-1$ or $f((s1)1^{\ell'-1})=\varepsilon$. Otherwise, we get a contradiction to the assumption that $\ell'$ is the smallest integer such that $e(s1^{\ell'})=e(s)+\ell$ or $f(s1^{\ell'})=\varepsilon$. It follows that $p_\ell(f(s))=p_{\ell-1}(f(s1))$. By analogous considerations, the remaining three cases can be shown.
\end{proof}

Using this, we prove that the following invariant holds during the conduction of $\mathcal{T}$.
\begin{lemma}\label{lem:binarycorrect}
    Let $N$ be a set of $n$ players, let $w$ be a winner function that exhibits a strongest player $i_w\in N$, and let $\sigma$ be a seeding for the players to the seed position names of $T_{n,k}$.
    Let $s$ be a string on the alphabet $\{0,1\}$ such that $|f(s)|\le r$ and $e(s)\le k$, and let $0\le \ell\le k-e(s)$. If on each tournament run in $\mathcal{T}=(N,T_{n,k},w,\sigma)$ from a leaf of $T_{n,k}$ to the vertex $v^\star$ labeled with $\tlabel{s}$ we have that at most $\ell$ games are manipulated, then the following holds.
    If the seed position name $\sigma(i_w)$ assigned  to $i_w$ has $p_{k-e(s)-\ell}(f(s))$ as a prefix, then $i_w$ is the winner of $v^\star$.
\end{lemma}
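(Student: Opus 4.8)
The plan is to prove \cref{lem:binarycorrect} by induction on the depth of $T_{n,k}$, working from the leaves up to the root, exactly mirroring the structure of the proof of \cref{lem:ternarycorrect}. The crucial difference is that the ``reference structure'' is no longer the normal tournament tree $T_n$ but the \emph{prefix tree} of the seed position names (see \cref{fig:prefix}), so the induction hypothesis is phrased in terms of $\sigma(i_w)$ having $p_{k-e(s)-\ell}(f(s))$ as a prefix rather than in terms of $i_w$ winning a specific game in $T_n$. The base case is a leaf $v^\star$ labeled $\tlabel{s}$ with $|f(s)|=r$ and $e(s)=k$, so $\ell=0$ and $p_0(f(s))=f(s)$; since $f(s)$ is the seed position name of $v^\star$, the player $i_w$ sits at $v^\star$ precisely when $\sigma(i_w)$ has $f(s)$ as a prefix (in fact equals it), and it trivially ``wins'' there.

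For the inductive step I would consider a non-leaf vertex $v^\star$ with label $\tlabel{s}$, split into cases according to the boundary constraints (whether $|f(s)|=r$, whether $e(s)=k$, and whether $\ell=k-e(s)$ or $\ell<k-e(s)$), just as in the ternary proof. In each case $v^\star$ has children labeled $\tlabel{s0}$ and $\tlabel{s1}$ (recall single-child vertices are duplicated into full binary trees), and I subdivide further on whether the game at $v^\star$ is manipulated, which shifts the manipulation budget on the children from $\ell$ to $\ell-1$ and the relevant prefix index accordingly. The key algebraic facts I would invoke are the four identities in \cref{obs:p}: appending a $0$ always shifts the prefix index by $+1$, while appending a $1$ shifts it by $0$ or $-1$ depending on whether $f(s)$ ends in $0$ or $1$ (i.e.\ whether the appended $1$ completes a detected error). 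These let me rewrite each child's hypothesized prefix $p_{\cdots}(f(s0))$ or $p_{\cdots}(f(s1))$ back into a prefix of the form $p_{k-e(s)-\ell}(f(s))$ or its parent $p_{k-e(s)-\ell+1}(f(s))$. Combined with the monotonicity observation---if $\sigma(i_w)$ has $p_{k-e(s)-\ell}(f(s))$ as a prefix then it also has the shorter $p_{k-e(s)-\ell+1}(f(s))$ as a prefix, since the latter is a prefix of the former in the prefix tree---and with the fact that $i_w$, being strongest, wins any unmanipulated game in which it participates, each case collapses to the desired conclusion.

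I expect the main obstacle to be the case analysis around appending a $1$ when $f(s)$ ends in a $1$, which corresponds to Bob confirming a detected error (the ``011'' / prefix-``11'' removal in the definition of $f$). Here the index bookkeeping is asymmetric: one appended $1$ can reduce $e$ by effectively one error, so the map between the child's error count $e(s1)$ and the prefix index is the subtle $e(s)=e(s1)-1$ relation established in \cref{obs:p}. In particular, the delicate subcase is the analogue of the ternary ``$\ell=k-e(s)$, game unmanipulated'' situation, where for one child the induction hypothesis simply does not apply (the shifted index would become negative), and I can only assert that \emph{some} player from that subtree wins; I must then argue that the conclusion still follows from the other child together with $i_w$ being strongest and the game at $v^\star$ being unmanipulated. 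Verifying that this exhaustively covers all combinations of the boundary conditions, and that the prefix-index arithmetic from \cref{obs:p} lines up in every branch, is the routine-but-error-prone heart of the argument; the conceptual content is entirely carried by \cref{obs:p} and the prefix-monotonicity of $i_w$'s seed name.
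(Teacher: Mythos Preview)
Your plan matches the paper's proof almost exactly: induction from the leaves, the same four-way boundary case split, the same use of \cref{obs:p} for the index arithmetic, and the same identification of the delicate ``$\ell=k-e(s)$, game unmanipulated'' subcase where the induction hypothesis fails on the $s1$-child. One ingredient you do not name, however, is genuinely needed in that subcase when $f(s)$ ends in a $1$: to pass from the assumption ``$\sigma(i_w)$ has $p_0(f(s))=f(s)$ as a prefix'' to the induction hypothesis on the $s0$-child (which requires $\sigma(i_w)$ to have $f(s0)=f(s)0$ as a prefix), you must use that seed position names contain no two consecutive ones, so the bit following the trailing $1$ of $f(s)$ in $\sigma(i_w)$ is forced to be $0$. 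This is exactly how the paper closes that case, and it is \emph{not} a consequence of \cref{obs:p} or of prefix monotonicity alone; without it the ``other child'' argument does not go through.
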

\begin{proof}
    We prove the lemma by induction on the depth of $T_{n,k}$ (from the leaves to the root). Let $v$ by a leaf in $T_{n,k}$. By definition, $v$ is labeled with some $\tlabel{s}$ such that $e(s)=k$. Hence, we have $\ell=0$. By definition, we also have that the seed position name of leaf $v$ is $f(s)$. Since $p_0(f(s))=f(s)$, the statement holds.

    Consider a vertex $v^\star$ labeled with $\tlabel{s}$ such that $|f(s)|\le r$ and $e(s)\le k$ in $T_{n,k}$ that is not a leaf. Assume that on each tournament run from a leaf to this vertex, we have that at most $\ell\le k-e(s)$ games are manipulated. Consider the following cases.
    \begin{itemize}
        \item We have that $|f(s)|= r$. Then $e(s)<k$, otherwise $v^\star$ is a leaf in $T_{n,k}$. 
        
        By definition, $v^\star$ has one vertex as a child labeled with $\tlabel{s1}$. Note that $e(s)=e(s1)-1$ if $f(s)$ ends with a 1, and $e(s)=e(s1)$ otherwise. 
        
        If the game at $v^\star$ is not manipulated, then by induction, the game at the child of~$v^\star$ is won by $i_w$ if $\sigma(i_w)$ has $p_{k-e(s1)-\ell}(f(s1))$ as a prefix. 
        If the string $f(s)$ ends with a 1, then we have that $p_{k-e(s1)-\ell}(f(s1))=p_{k-e(s)-\ell}(f(s))$.
        If the string $f(s)$ ends with a 0, then by \cref{obs:p} we also have that $p_{k-e(s1)-\ell}(f(s1))=p_{k-e(s)-\ell}(f(s1))=p_{k-e(s)-\ell}(f(s))$.
        
        If the game at $v^\star$ is manipulated, then by induction, the game at the child of~$v^\star$ is won by $i_w$ if $\sigma(i_w)$ has $p_{k-e(s1)-\ell+1}(f(s1))$ as a prefix.
        If the string $f(s)$ ends with a 1, then we have that $p_{k-e(s1)-\ell+1}(f(s1))=p_{k-e(s)-\ell+1}(f(s))$.
        If the string $f(s)$ ends with a 0, then by \cref{obs:p} we also have that $p_{k-e(s1)-\ell+1}(f(s1))=p_{k-e(s)-\ell+1}(f(s1))=p_{k-e(s)-\ell+1}(f(s))$. 

        Hence, in both above cases, we can conclude that the game at $v^\star$ is won by $i_w$ if $\sigma(i_w)$ has $p_{k-e(s)-\ell}(f(s))$ as a prefix and the lemma statement holds.
        \item We have that $e(s)=k$. Then $|f(s)|<r$, otherwise $v^\star$ is a leaf in $T_{n,k}$. Furthermore, we must have that $\ell=0$.

        In this case, $v^\star$ has two children labeled with $\tlabel{s0}$ and $\tlabel{s1}$, respectively. By induction, the games at the vertices labeled with $\tlabel{s0}$ and $\tlabel{s1}$, respectively, are won by $i_w$ if $\sigma(i_w)$ has $p_{0}(f(s0))$ and $p_{0}(f(s1))$, respectively, as a prefix. We have $p_{0}(f(s0))=f(s0)$ and $p_{0}(f(s1))=f(s1)$. Since the game at $v^\star$ is not manipulated, we have that the game is won by $i_w$ if $\sigma(i_w)$ has $p_{0}(f(s))=f(s)$ as a prefix. Hence, the lemma statement holds.
        
        \item We have that $|f(s)|< r$, $e(s)<k$, and $\ell=k-e(s)$.

        In this case, $v^\star$ has two children labeled with $\tlabel{s0}$ and $\tlabel{s1}$, respectively. Note that $e(s)=e(s0)$, and note that $e(s)=e(s1)-1$ if $f(s)$ ends with a 1, and $e(s)=e(s1)$ otherwise.

        Consider the case where the game at $v^\star$ is manipulated. Then $\ell-1$ games on each tournament run from a leaf to each of the vertices labeled with $\tlabel{s0}$ and $\tlabel{s1}$, respectively, are manipulated. 
        By induction, the games at the vertices labeled with $\tlabel{s0}$ and $\tlabel{s1}$, respectively, are won by $i_w$ if $\sigma(i_w)$ has $p_{1}(f(s0))$ and $p_{x}(f(s1))$, respectively, as a prefix, where $x=0$ if the string $f(s)$ ends with a 1 and $x=1$ otherwise. By \cref{obs:p} we have that $p_{1}(f(s0))=p_0(f(s))$. Furthermore, by \cref{obs:p} we have that if the string $f(s)$ ends with a 1, then $p_{0}(f(s1))=p_1(f(s))$, and if the string $f(s)$ ends with a 0, then $p_{1}(f(s1))=p_1(f(s))$. 
        Note that $p_1(f(s))$ is a prefix of $p_0(f(s))$.
        Hence, we have that if $\sigma(i_w)$ has $p_{0}(f(s))=f(s)$, then $i_w$ wins the games at both leafs of $v^\star$. It follows that the winner of the game at $v^\star$ is $i_w$ if $\sigma(i_w)$ has $p_{0}(f(s))=f(s)$ as a prefix and the lemma statement holds.

        Consider the case where the game at $v^\star$ is not manipulated. 
        Then~$\ell$ games on each tournament run from a leaf to the vertex labeled with $\tlabel{s0}$ and $\tlabel{s1}$, respectively, are manipulated. We now make a case distinction on whether the string $f(s)$ ends with a 0 or a 1. First, consider the case where the string $f(s)$ ends with a 0.
        Then by induction, the games at the vertices labeled with $\tlabel{s0}$ and $\tlabel{s1}$, respectively, are won by $i_w$ if $\sigma(i_w)$ has $p_{0}(f(s0))$ and $p_{0}(f(s1))$, respectively, as a prefix. By \cref{obs:p} we have that $p_{0}(f(s1))=p_0(f(s))$. 
        Hence, if the player assigned to the leaf labeled with $\tlabel{s1}$ wins the game at $v^\star$, then the game is won by $i_w$ if $\sigma(i_w)$ has $p_{0}(f(s))$ as a prefix. If the player assigned to the leaf labeled with $\tlabel{s0}$ wins the game at $v^\star$ and this player is different from $i_w$, then we have that $\sigma(i_w)$ does not have $p_0(f(s))$ as a prefix. This is true because otherwise, that is, if $\sigma(i_w)$ has $p_{0}(f(s))$ as a prefix, then we have established that $i_w$ wins the game at the leaf of $v^\star$ that is labeled with 
        $\tlabel{s1}$, and since the game at $v^\star$ is not manipulated and $i_w$ is the strongest player, $i_w$ wins the game at $v^\star$.
        Now consider the case where the string $f(s)$ ends with a 1.
        Then by induction, the game at the vertex labeled with $\tlabel{s0}$ is won by $i_w$ if $\sigma(i_w)$ has $p_{0}(f(s0))$ as a prefix. 
        For the game at the vertex labeled with $\tlabel{s1}$ we cannot apply the induction hypothesis, since $e(s)=e(s1)-1$ and hence $k-e(s1)-\ell=-1$. We only know that it is won by a player that is seeded into a position in the subtree of $T_{n,k}$ that is rooted at the vertex labeled with $\tlabel{s1}$. However, since the string $f(s)$ ends with a 1, we have that each seed position name that has $f(s)$ as a prefix also has $f(s)0=f(s0)$ as a prefix, because the seed position names cannot have two consecutive ones. 
        It follows that if $\sigma(i_w)$ has $p_{0}(f(s))$ as a prefix, it also has $p_{0}(f(s0))$ as a prefix and hence wins the game at the leaf of $v^\star$ labeled with $\tlabel{s0}$. Since the game at $v^\star$ is not manipulated and $i_w$ is the strongest player, the game at $v^\star$ is won by $i_w$ if $\sigma(i_w)$ has $p_{0}(f(s))=f(s)$ as a prefix and the lemma statement holds.
        %

        \item We have that $|f(s)|< r$, $e(s)<k$, and $\ell<k-e(s)$. 

In this case, $v^\star$ has two children labeled with $\tlabel{s0}$ and $\tlabel{s1}$, respectively. Note that $e(s)=e(s0)$, and note that $e(s)=e(s1)-1$ if $f(s)$ ends with a 1, and $e(s)=e(s1)$ otherwise.

        Consider the case where the game at $v^\star$ is manipulated. Then $\ell-1$ games on each tournament run from a leaf to the vertex labeled with $\tlabel{s0}$ and $\tlabel{s1}$, respectively, are manipulated.
By induction, the games at the vertices labeled with $\tlabel{s0}$ and $\tlabel{s1}$, respectively, are won by $i_w$ if $\sigma(i_w)$ has $p_{k-e(s0)-\ell+1}(f(s0))$ and $p_{k-e(s1)-\ell+1}(f(s1))$, respectively, as a prefix. By \cref{obs:p} we have that $p_{k-e(s0)-\ell+1}(f(s0))=p_{k-e(s)-\ell}(f(s))$. Furthermore, by \cref{obs:p} we have that if the string $f(s)$ ends with a 1, then $p_{k-e(s1)-\ell+1}(f(s1))=p_{k-e(s)-\ell+1}(f(s))$, and if the string $f(s)$ ends with a 0, then $p_{k-e(s1)-\ell+1}(f(s1))=p_{k-e(s)-\ell}(f(s))$. 
Note that $p_{k-e(s)-\ell+1}(f(s))$ is a prefix of $p_{k-e(s)-\ell}(f(s))$.
Hence, we have that if $\sigma(i_w)$ has $p_{k-e(s)-\ell}(f(s))$ as a prefix, then $i_w$ wins the games at both leafs of $v^\star$. It follows that the winner of the game at $v^\star$ is $i_w$ if $\sigma(i_w)$ has $p_{k-e(s)-\ell}(f(s))$ as a prefix and the lemma statement holds.

        Consider the case where the game at $v^\star$ is not manipulated. 
        Then~$\ell$ games on each tournament run from a leaf to the vertex labeled with $\tlabel{s0}$ and $\tlabel{s1}$, respectively, are manipulated.
By induction, the games at the vertices labeled with $\tlabel{s0}$ and $\tlabel{s1}$, respectively, are won by $i_w$ if $\sigma(i_w)$ has $p_{k-e(s0)-\ell}(f(s0))$ and $p_{k-e(s1)-\ell}(f(s1))$, respectively, as a prefix.
By \cref{obs:p} we have that if the string $f(s)$ ends with a 1, then $p_{k-e(s1)-\ell}(f(s1))=p_{k-e(s1)-\ell+1}(f(s))=p_{k-e(s)-\ell}(f(s))$, and if the string $f(s)$ ends with a 0, then $p_{k-e(s1)-\ell}(f(s1))=p_{k-e(s1)-\ell}(f(s))=p_{k-e(s)-\ell}(f(s))$.
It follows that if $\sigma(i_w)$ has $p_{k-e(s)-\ell}(f(s))$ as a prefix, then $i_w$ wins the game at the leaf of $v^\star$ labeled with $\tlabel{s1}$. Since the game at $v^\star$ is not manipulated and $i_w$ is the strongest player, the game at $v^\star$ is won by $i_w$ if $\sigma(i_w)$ has $p_{k-e(s)-\ell}(f(s))=f(s)$ as a prefix and the lemma statement holds.
    \end{itemize}
    This finishes the proof.
\end{proof}

\cref{thm:binary} now immediately follows from \cref{lem:binarycorrect}, \cref{obs:timebinary}, and \cref{cor:binaryheight}.
Finally, we discuss how to obtain different bounds on the depth of $T_{n,k}$. 
We can modify the communication protocol in \cref{sec:protocol2} to use longer sequences of consecutive ones to indicate errors in the communication.
Accordingly, we can modify the functions $f$ and $e$, that is, $f$ exhaustively removes the prefix $1^x$ from the input string and then removes all occurrences of the bit string $01^x$ for some fixed $x$ from the input string. The function $e$ is defined analogously. Our analysis in \cref{lem:binaryheight} is done for $x=2$. 
If we choose a larger $x$, then we effectively increase the number of different seed position names of length $\ell$. Informally speaking, we can trade off the linear factor in front of $k$ with the linear factor in front of $\log n$ in the height of $T_{n,k}$. For example, for $x=3$ we can estimate the number of different bit strings of length $\ell$ using the so-called tribonacci numbers, a generalization of the Fibonacci numbers~\cite{gardner1959scientific}.
For the $r$th tribonacci number it is known it roughly equals $1.84^r$~\cite{gardner1959scientific}. This yields an upper bound for the height of $T_{n,k}$ of $\lceil1.14 \log n\rceil+4k$.
The formal argument is an analogous argument to the $x=2$ case (and a similar argument also holds for $x\ge 4$). We can conclude the following.
\begin{corollary}
    There exists a function $g$ such that every $\varepsilon>0$ there is a robust tournament tree $T_{n,k}$ of height at most $\lceil (1+\varepsilon)\log n\rceil + g(\varepsilon^{-1})\cdot k$.
\end{corollary}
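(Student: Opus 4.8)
The plan is to generalize the binary construction of \cref{thm:binary} from the hard-coded run length $2$ to an arbitrary run length $x\ge 2$, and then optimize over $x$. Fix $x$ and redefine the label functions exactly as sketched just before the corollary: let $f$ exhaustively delete the prefix $1^x$ and then every occurrence of the substring $01^x$, and let $e$ count the number of such deletions. The admissible seed position names then become precisely the binary strings of length $r$ that contain no run of $x$ consecutive ones, since every occurrence of $1^x$ is either a prefix or is immediately preceded by a $0$. The recursive construction, the child rules, and the linear-time computability (\cref{obs:timebinary}) all carry over verbatim with $11$ replaced by $1^x$; call the resulting tree $T_{n,k}^{(x)}$.

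First I would redo the height analysis of \cref{lem:binaryheight} using the potential $\mu(s) = r - |f(s)| + (x+1)\bigl(k - e(s)\bigr)$. Appending a $0$, or appending a $1$ that does not complete a run of $x$ ones, increases $|f|$ by one and leaves $e$ fixed, so $\mu$ drops by exactly one. Appending a $1$ that completes a run of $x$ ones increments $e$ by one and shortens $f$ by $x$ (when an $01^x$ substring is removed) or by $x-1$ (when the $1^x$ prefix is removed); with the coefficient $x+1$ this makes $\mu$ drop by exactly $1$ or by $2$, respectively. Hence $\mu$ decreases by at least one per level and never goes negative, so the height is at most the root value $r+(x+1)k$. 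This is exactly why the coefficient of $k$ is $x+1$, recovering $3$ for $x=2$ and $4$ for $x=3$.

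Next I would re-establish robustness by generalizing \cref{obs:p} and \cref{lem:binarycorrect}. The prefix operator $p_\ell$ is defined as before, and the shift identities of \cref{obs:p} generalize, except that appending a single $1$ now changes $e$ only once every $x$-th one inside a run; the bookkeeping must therefore track the length of the current trailing run of ones in $f(s)$, rather than merely whether $f(s)$ ends in $0$ or $1$. \textbf{This case analysis is the main obstacle.} The induction of \cref{lem:binarycorrect} must split according to how far the trailing run of ones is from completing an error indication, and the delicate sub-case where the induction hypothesis cannot be applied (because $e$ jumps precisely when the run is completed) must be handled by the same prefix-closure argument used in the $x=2$ proof: a seed name having $f(s)$ as a prefix, where $f(s)$ ends in a maximal run of $x-1$ ones, is forced to continue with a $0$, so the child obtained by appending a $0$ already inherits the relevant prefix. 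Conceptually nothing new happens relative to $x=2$; it is purely a matter of carrying the extra run-length parameter through the four cases.

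Finally I would optimize over $x$. The number of length-$r$ binary strings with no run of $x$ ones obeys the $x$-step Fibonacci recurrence $a_r = a_{r-1}+a_{r-2}+\dots+a_{r-x}$, whose growth rate $\phi_x$ is the largest root of $t^x = t^{x-1}+\dots+t+1$, equivalently of $t^{x+1}-2t^x+1=0$. It is standard that $\phi_x$ is strictly increasing in $x$ with $\lim_{x\to\infty}\phi_x = 2$, and that $a_r \ge \phi_x^{\,r}$ up to the usual index shift, so $r=\lceil \log n/\log\phi_x\rceil$ names suffice. Given $\varepsilon>0$, since $2^{1/(1+\varepsilon)}<2$ and $\phi_x\to 2$, there is a least $x=x(\varepsilon)$ with $\phi_x\ge 2^{1/(1+\varepsilon)}$, i.e.\ $1/\log\phi_x\le 1+\varepsilon$, giving $r\le\lceil(1+\varepsilon)\log n\rceil$. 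Combining this with the height bound $r+(x+1)k$ and setting $g(\varepsilon^{-1})=x(\varepsilon)+1$ yields the claimed bound $\lceil(1+\varepsilon)\log n\rceil+g(\varepsilon^{-1})\,k$, completing the proof.
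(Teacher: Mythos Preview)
Your proposal is correct and follows essentially the same approach as the paper: generalize the run length from $2$ to $x$, redo the height bound to obtain $r+(x+1)k$, count seed names via the $x$-step Fibonacci recurrence with dominant root $\phi_x\to 2$, and choose $x$ depending on $\varepsilon$. In fact you supply more detail than the paper does---the paper merely asserts that the height and correctness arguments are ``analogous'' to the $x=2$ case, whereas you spell out the potential $\mu(s)=r-|f(s)|+(x+1)(k-e(s))$ and correctly flag that the induction in \cref{lem:binarycorrect} must track the length of the trailing run of ones rather than a single end-bit.
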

This means that for constant $k$, we can get the height of the robust tournament tree to be arbitrarily close to $\log n$.

\section{Conclusion}
We showed how to obtain tournament trees that are robust to manipulation by adding redundancy. 
In particular, we show that it is possible to withstand up to a $1/3$ fraction of manipulations along each leaf-to-root path, with the trade-off being only a polynomial increase in the tournament size.
To this end, we reveal a surprising relation between robust tournament design and communication protocols that use a feedback channel and are robust against adversarial noise. With our work, we lay down the foundation for future research in this direction. 
There are many natural further questions.
\begin{itemize}
    \item Can we obtain lower bounds on the height of robust tournament trees?
    \item Can we obtain tournament trees that work in settings where there is no strongest player?\footnote{It is relatively easy to see that the answer to this question is yes if we allow the height of the tournament tree to be linear in $k\cdot\log n$. Hence, the question is whether we can obtain a height that is linear in $\log n+k$.}
    \item Can we extend the techniques for other tournament formats or other collective decision making mechanisms?
\end{itemize}

\bibliographystyle{abbrvnat}
\bibliography{bib}	

\begin{thebibliography}{33}
\providecommand{\natexlab}[1]{#1}
\providecommand{\url}[1]{\texttt{#1}}
\expandafter\ifx\csname urlstyle\endcsname\relax
  \providecommand{\doi}[1]{doi: #1}\else
  \providecommand{\doi}{doi: \begingroup \urlstyle{rm}\Url}\fi

\bibitem[Aziz et~al.(2018)Aziz, Gaspers, Mackenzie, Mattei, Stursberg, and
  Walsh]{aziz2014fixing}
H.~Aziz, S.~Gaspers, S.~Mackenzie, N.~Mattei, P.~Stursberg, and T.~Walsh.
\newblock Fixing balanced knockout and double elimination tournaments.
\newblock \emph{Artificial Intelligence}, 262:\penalty0 1--14, 2018.

\bibitem[Berlekamp(1964)]{Berlekamp1964BlockCW}
E.~R. Berlekamp.
\newblock Block coding with noiseless feedback.
\newblock In \emph{Ph.D. thesis, Massachusetts Institute of Technology}, 1964.
\newblock URL \url{https://api.semanticscholar.org/CorpusID:109356025}.

\bibitem[Brandt and Fischer(2007)]{brandt2007pagerank}
F.~Brandt and F.~Fischer.
\newblock Pagerank as a weak tournament solution.
\newblock In \emph{Proceedings of the 3rd International Workshop on Web and
  Internet Economics (WINE)}, pages 300--305. Springer, 2007.

\bibitem[Chaudhary et~al.(2024)Chaudhary, Molter, and
  Zehavi]{chaudhary2024make}
J.~Chaudhary, H.~Molter, and M.~Zehavi.
\newblock How to make knockout tournaments more popular?
\newblock In \emph{Proceedings of the 2024 AAAI Conference on Artificial
  Intelligence (AAAI)}, volume~38, pages 9582--9589, 2024.

\bibitem[Chaudhary et~al.(2025)Chaudhary, Molter, and Zehavi]{aaai25}
J.~Chaudhary, H.~Molter, and M.~Zehavi.
\newblock Parameterized analysis of bribery in challenge the champ tournaments.
\newblock In \emph{Proceedings of the 39th Annual AAAI Conference on Artificial
  Intelligence (AAAI)}, page TBA, 2025.

\bibitem[Connolly and Rendleman(2011)]{CR11}
Connolly and Rendleman.
\newblock Tournament qualification, seeding and selection efficiency.
\newblock \emph{Technical Report 2011-96, Tuck School of Business}, 2011.

\bibitem[Feltes(2013)]{feltes2013match}
T.~Feltes.
\newblock Match fixing in western europe.
\newblock In \emph{Match-fixing in international sports: Existing processes,
  law enforcement, and prevention strategies}, pages 15--30. Springer, 2013.

\bibitem[Gardner(1959)]{gardner1959scientific}
M.~Gardner.
\newblock \emph{The scientific American book of mathematical puzzles and
  diversions}.
\newblock Simon and Schuster, 1959.

\bibitem[Groh et~al.(2012)Groh, Moldovanu, Sela, and Sunde]{GMSS12}
Groh, Moldovanu, Sela, and Sunde.
\newblock Optimal seedings in elimination tournaments.
\newblock \emph{Economic Theory}, 49\penalty0 (1):\penalty0 59--80, 2012.

\bibitem[Gupta et~al.(2018{\natexlab{a}})Gupta, Roy, Saurabh, and
  Zehavi]{gupta2018rigging}
S.~Gupta, S.~Roy, S.~Saurabh, and M.~Zehavi.
\newblock When rigging a tournament, let greediness blind you.
\newblock In \emph{Proceedings of the 27th International Joint Conference on
  Artificial Intelligence (IJCAI)}, pages 275--281, 2018{\natexlab{a}}.

\bibitem[Gupta et~al.(2018{\natexlab{b}})Gupta, Roy, Saurabh, and
  Zehavi]{gupta2018winning}
S.~Gupta, S.~Roy, S.~Saurabh, and M.~Zehavi.
\newblock Winning a tournament by any means necessary.
\newblock In \emph{Proceedings of the 27th International Joint Conference on
  Artificial Intelligence (IJCAI)}, pages 282--288, 2018{\natexlab{b}}.

\bibitem[Gupta et~al.(2019)Gupta, Saurabh, Sridharan, and
  Zehavi]{gupta2019succinct}
S.~Gupta, S.~Saurabh, R.~Sridharan, and M.~Zehavi.
\newblock On succinct encodings for the tournament fixing problem.
\newblock In \emph{Proceedings of the 28th International Joint Conference on
  Artificial Intelligence (IJCAI)}, pages 322--328, 2019.

\bibitem[Hill(2010)]{hill2010critical}
D.~Hill.
\newblock A critical mass of corruption: Why some football leagues have more
  match-fixing than others.
\newblock \emph{International Journal of Sports Marketing and Sponsorship},
  11\penalty0 (3):\penalty0 38--52, 2010.

\bibitem[Karchmer and Wigderson(1990)]{karchmer1990monotone}
M.~Karchmer and A.~Wigderson.
\newblock Monotone circuits for connectivity require super-logarithmic depth.
\newblock \emph{SIAM Journal on Discrete Mathematics}, 3\penalty0 (2):\penalty0
  255--265, 1990.

\bibitem[Kim and Williams(2015)]{kim2015fixing}
M.~P. Kim and V.~V. Williams.
\newblock Fixing tournaments for kings, chokers, and more.
\newblock In \emph{Proceedings of the 24th International Joint Conference on
  Artificial Intelligence (IJCAI)}, pages 561--567, 2015.

\bibitem[Kim et~al.(2017)Kim, Suksompong, and Williams]{kim2017can}
M.~P. Kim, W.~Suksompong, and V.~V. Williams.
\newblock Who can win a single-elimination tournament?
\newblock \emph{SIAM Journal on Discrete Mathematics}, 31\penalty0
  (3):\penalty0 1751--1764, 2017.

\bibitem[Konicki and Williams(2019)]{konicki2019bribery}
C.~Konicki and V.~V. Williams.
\newblock Bribery in balanced knockout tournaments.
\newblock In \emph{Proceedings of the 18th International Conference on
  Autonomous Agents and Multiagent Systems (AAMAS)}, pages 2066--2068, 2019.

\bibitem[Laslier(1997)]{laslier1997tournament}
J.~F. Laslier.
\newblock \emph{Tournament solutions and majority voting}, volume~7.
\newblock Springer, 1997.

\bibitem[Manoli and Antonopoulos(2015)]{manoli2015only}
A.~E. Manoli and G.~A. Antonopoulos.
\newblock ‘the only game in town?’: football match-fixing in greece.
\newblock \emph{Trends in organized crime}, 18:\penalty0 196--211, 2015.

\bibitem[Mattei and Walsh(2016)]{mattei2016empirical}
N.~Mattei and T.~Walsh.
\newblock Empirical evaluation of real world tournaments.
\newblock \emph{arXiv preprint arXiv:1608.01039}, 2016.

\bibitem[Mattei et~al.(2015)Mattei, Goldsmith, Klapper, and
  Mundhenk]{mattei2015complexity}
N.~Mattei, J.~Goldsmith, A.~Klapper, and M.~Mundhenk.
\newblock On the complexity of bribery and manipulation in tournaments with
  uncertain information.
\newblock \emph{Journal of Applied Logic}, 13\penalty0 (4):\penalty0 557--581,
  2015.

\bibitem[MediaNews({\natexlab{a}})]{news1}
MediaNews.
\newblock Time out for match-fixers manipulating livestreams.
\newblock
  \url{https://www.europol.europa.eu/media-press/newsroom/news/time-out-for-match-fixers-manipulating-livestreams},
  2020{\natexlab{a}}.

\bibitem[MediaNews({\natexlab{b}})]{news}
MediaNews.
\newblock Top controversies of world cup 2023.
\newblock \url{
  https://timesofindia.indiatimes.com/sports/cricket/icc-world-cup/news/angelo-mathews-timed-out-to-pitch-switch-top-controversies//-of-world-cup-2023/articleshow/105326878.cms},
  2023{\natexlab{b}}.

\bibitem[MediaNews({\natexlab{c}})]{news2}
MediaNews.
\newblock {Most Talked-About Corruption Scandals in Sports History}.
\newblock
  \url{https://247wallst.com/special-report/2023/09/07/17-most-talked-about-corruption-scandals-in-sports-history/},
  2023{\natexlab{c}}.

\bibitem[Rosen(1985)]{rosen1985prizes}
S.~Rosen.
\newblock Prizes and incentives in elimination tournaments, 1985.

\bibitem[Russell and Walsh(2009)]{russell2009manipulating}
T.~Russell and T.~Walsh.
\newblock Manipulating tournaments in cup and round robin competitions.
\newblock In \emph{Proceedings of the 1st International Conference on
  Algorithmic Decision Theory (ADT)}, pages 26--37. Springer, 2009.

\bibitem[Stanton and Williams(2013)]{stanton2013structure}
I.~Stanton and V.~V. Williams.
\newblock The structure, efficacy, and manipulation of double-elimination
  tournaments.
\newblock \emph{Journal of Quantitative Analysis in Sports}, 9\penalty0
  (4):\penalty0 319--335, 2013.

\bibitem[Suksompong(2021)]{suksompong2021tournaments}
W.~Suksompong.
\newblock Tournaments in computational social choice: Recent developments.
\newblock In \emph{Proceedings of the 28th International Joint Conference on
  Artificial Intelligence (IJCAI)}, pages 4611--4618, 2021.

\bibitem[Tullock(1980)]{Tullock80}
Tullock.
\newblock Toward a theory of the rent-seeking society.
\newblock \emph{Texas A\&M University Press}, 1980.

\bibitem[Vu et~al.(2009)Vu, Altman, and Shoham]{vu2009complexity}
T.~Vu, A.~Altman, and Y.~Shoham.
\newblock On the complexity of schedule control problems for knockout
  tournaments.
\newblock In \emph{Proceedings of the 8th International Conference on
  Autonomous Agents and Multiagent Systems (AAMAS)}, pages 225--232, 2009.

\bibitem[Williams(2010)]{williams2010fixing}
V.~V. Williams.
\newblock Fixing a tournament.
\newblock In \emph{Proceedings of the 2010 AAAI Conference on Artificial
  Intelligence (AAAI)}, volume~24, pages 895--900, 2010.

\bibitem[Williams(2016)]{williams_moulin_2016}
V.~V. Williams.
\newblock \emph{Knockout Tournaments}, page 453–474.
\newblock Cambridge University Press, 2016.

\bibitem[Zehavi(2023)]{zehavi2023tournament}
M.~Zehavi.
\newblock Tournament fixing parameterized by feedback vertex set number is fpt.
\newblock In \emph{Proceedings of the 2023 AAAI Conference on Artificial
  Intelligence (AAAI)}, volume~37, pages 5876--5883, 2023.

\end{thebibliography}

\end{document}